\def\renewtheorem#1{%
  \expandafter\let\csname#1\endcsname\relax
  \expandafter\let\csname c@#1\endcsname\relax
  \gdef\renewtheorem@envname{#1}
  \renewtheorem@secpar
}
\def\renewtheorem@secpar{\@ifnextchar[{\renewtheorem@numberedlike}{\renewtheorem@nonumberedlike}}
\def\renewtheorem@numberedlike[#1]#2{\newtheorem{\renewtheorem@envname}[#1]{#2}}
\def\renewtheorem@nonumberedlike#1{
  \def\renewtheorem@caption{#1}
  \edef\renewtheorem@nowithin{\noexpand\newtheorem{\renewtheorem@envname}{\renewtheorem@caption}}
  \renewtheorem@thirdpar
}
\def\renewtheorem@thirdpar{\@ifnextchar[{\renewtheorem@within}{\renewtheorem@nowithin}}
\def\renewtheorem@within[#1]{\renewtheorem@nowithin[#1]}
\theoremstyle{plain}
\newtheorem{proposition}{Proposition}
\theoremstyle{remark}
    \renewcommand*{\bm}[1]{#1}%
\providecommand{\catname}{\mathbf} 
\providecommand{\clsname}{\mathcal}
\providecommand{\oname}[1]{\operatorname{\mathsf{#1}}}
\def\defcatname#1{\expandafter\def\csname B#1\endcsname{\catname{#1}}}
\def\defcatnames#1{\ifx#1\defcatnames\else\defcatname#1\expandafter\defcatnames\fi}
\def\defclsname#1{\expandafter\def\csname C#1\endcsname{\clsname{#1}}}
\def\defclsnames#1{\ifx#1\defclsnames\else\defclsname#1\expandafter\defclsnames\fi}
\def\defbbname#1{\expandafter\def\csname BB#1\endcsname{\mathbb{#1}}}
\def\defbbnames#1{\ifx#1\defbbnames\else\defbbname#1\expandafter\defbbnames\fi}
\def\Set{\catname{Set}}
\providecommand{\argument}{\operatorname{-\!-}}
\DeclareOldFontCommand{\bf}{\normalfont\bfseries}{\mathbf}
\providecommand{\mplus}{{\scriptscriptstyle\bf+}} 	
\providecommand{\PSet}{{\mathcal P}}				
\providecommand{\Id}{\operatorname{Id}}
\providecommand{\Hom}{\mathsf{Hom}}
\providecommand{\id}{\mathsf{id}}
\providecommand{\comp}{\mathbin{\circ}}
\providecommand{\ito}{\hookrightarrow}						
\providecommand{\dar}{\kern-1.2pt\operatorname{\downarrow}}	
\providecommand{\uar}{\kern-1.2pt\operatorname{\uparrow}}	
\providecommand{\mto}{\mapsto}
\providecommand{\xto}[1]{\,\xrightarrow{#1}\,}
\providecommand{\appr}{\sqsubseteq}
\providecommand{\bigjoin}{\bigsqcup}
\providecommand{\brks}[1]{\langle #1\rangle}
\providecommand{\inl}{\oname{inl}}
\providecommand{\inr}{\oname{inr}}
\providecommand{\inj}{\oname{in}}
\DeclareSymbolFont{Symbols}{OMS}{cmsy}{m}{n}
\DeclareMathSymbol{\iobj}{\mathord}{Symbols}{"3B}
\providecommand{\ev}{\oname{ev}}
\providecommand{\lsem}{\llbracket}
\providecommand{\rsem}{\rrbracket}
\providecommand{\sem}[1]{\lsem #1 \rsem}
\providecommand{\by}[1]{\text{/\!/~#1}}			
\providecommand{\pacman}[1]{}					
\providecommand{\noqed}{\def\qed{}}				
\providecommand{\mone}{{\text{\kern.5pt\rmfamily-}\sf\kern-.5pt1}}
\newlist{citemize}{itemize}{1}
\setlist[citemize]{label=\labelitemi,wide} 
\newlist{cenumerate}{enumerate}{1}
\setlist[cenumerate,1]{label=\arabic*.~,ref={\arabic*},wide} 
\def\mfix#1{\oname{#1}\@ifnextchar\bgroup\@mfix{}}	
\def\@mfix#1{#1\@ifnextchar\bgroup\mfix{}}			
\providecommand{\case}[3]{\mfix{case}{\mathbin{}#1}{of}{#2}{\kern-1pt;}{\mathbin{}#3}}
\renewcommand{\dar}{\operatorname{\raisebox{.3ex}{\reflectbox{\rotatebox[origin=c]{-90}{$\to$}}}\kern1pt}}
\renewcommand{\uar}{\operatorname{\raisebox{.3ex}{\reflectbox{\rotatebox[origin=c]{90}{$\to$}}}\kern1pt}}
\title{A Semantics for Hybrid Iteration}
\titlerunning{A Semantics for Hybrid Iteration}
\author{Sergey Goncharov}{Lehrstuhl f\"ur Theoretische Informatik, Friedrich-Alexander Universit\"at Erlangen-N\"urnberg, Germany}
{sergey.goncharov@fau.de}{}{Research supported by Deutsche Forschungsgemeinschaft (DFG)
under project GO~2161/1-2.}
\author{Julian Jakob}{Lehrstuhl f\"ur Theoretische Informatik, Friedrich-Alexander Universit\"at Erlangen-N\"urnberg, Germany}
{julian.jakob@fau.de}{}{}
\author{Renato Neves}{INESC TEC (HASLab) \& University of Minho, Portugal}
{nevrenato@di.uminho.pt}{}{Research supported by ERDF -- European Regional Development
Fund through the Operational Programme for Competitiveness and
Internationalisation -- COMPETE 2020 Programme and by National Funds
through the Portuguese funding agency, FCT -- Funda\c{c}\~{a}o para a
Ci\^{e}ncia e a Tecnologia within projects POCI-01-0145-FEDER-016692
and 02/SAICT/2017.}
\authorrunning{S. Goncharov and J. Jakob and R. Neves}
\subjclass{\ccsdesc[500]{Theory of computation~Timed and hybrid models}} 
\keywords{Elgot iteration, guarded iteration, hybrid monad, Zeno behaviour.  }
\newcommand{\infrule}[2]{\frac{#1}{#2}}
\newcommand{\anonrule}[3]{\infrule{#2}{#3}}
\newcommand{\istar}{\dagger}  	
\newcommand\pstar{{\kern1pt\scalerel*{\includegraphics{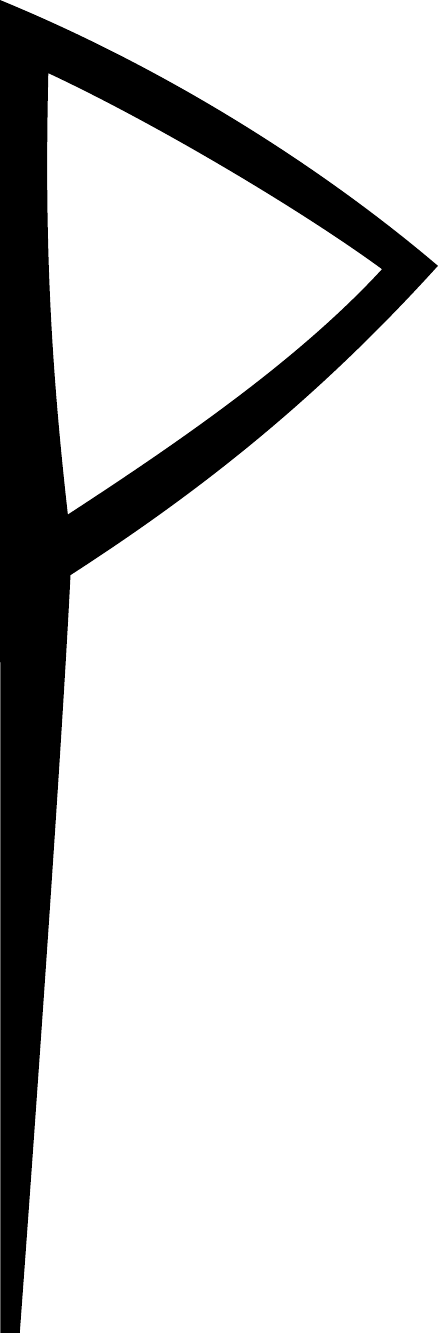}}{\dagger}}}
\newcommand{\iistar}{\ddagger}  
\newcommand{\klstar}{\star}     
\newcommand{\kklstar}{\text{\kreuz}}
\newcommand{\cpto}{
  \mathrel{\raisebox{0.5ex}{\kern3pt\ensuremath{\mathrel{\tikz{ \draw [-stealth,line width=0.4] (0.6ex,1ex) -- (0,1ex) -- (0,0.4ex) -- (2.2ex,0.4ex); }}}\kern3pt}}
}
\providecommand{\dist}{\oname{dist}}
\newcommand{\dr}{{\operatorname{\mathsf{d}}}} 
\renewcommand{\ev}{{\operatorname{\mathsf{e}}}} 
\newcommand{\ite}[3]{\mfix{\kern-2pt}{#1}{\lhd}{#2}{\rhd}{\mathbin{}#3}}
\newcommand{\Reals}{\BBR} 
\newcommand{\Rz}{\BBR_{\mplus}} 
\newcommand{\nats}{\mathbb{N}} 
\newcommand{\Rze}{\overline{\BBR}_{\mplus}} 
\newcommand{\blank}{\, - \,}
\newcommand{\const}[1]{\underline{#1}}
\renewcommand{\comp}{\,}
\newcommand{\pAt}{\mathsf{At}}
\newcommand{\End}{\mathsf{End}}
\newcommand{\prog}[1][p]{\mathsf{#1}}
\newcommand{\sComp}{\hspace{1pt}\large{\mathbf{;}}\hspace{2pt}}
\newcommand{\pv}[1]{\langle #1 \rangle}
\newcommand{\img}{\mathsf{Img}}
\renewcommand{\BBH}{{\bm{\mathsf{H}}}}
\renewcommand{\BBM}{{\bm{\mathsf{M}}}}
\renewcommand{\BBT}{{\bm{\mathsf{T}}}}
\newcommand{\dom}{\oname{dom}}
\renewcommand{\sup}{\oname{sup}}
\pgfplotsset{
  xmin=0,
  xmax=1,
  ymax=1,
  grid=both,
  minor tick num=4,
  grid style={line width=.1pt, draw=gray!10},
  major grid style={line width=.2pt,draw=gray!50},
  axis lines=middle,
  axis line style={-Stealth},
  tick label style={font=\tiny},
  enlargelimits={abs=0.05}
}
\tikzset{
  commutative diagrams/.cd,
  arrow style=tikz,
  diagrams={>=stealth},
  row sep=large,
  column sep = huge
}
\begin{document}

\maketitle

\begin{abstract}
  The recently introduced notions of \emph{guarded traced (monoidal)
    category} and \emph{guarded (pre-)iterative monad} aim at unifying
  different instances of partial iteration whilst keeping in touch
  with the established theory of total iteration and preserving its
  merits. In this paper we use these notions and the corresponding
  stock of results to examine different types of iteration for hybrid
  computation. As a starting point we use an available notion of
  \emph{hybrid monad} restricted to the category of sets, and modify
  it in order to obtain a suitable notion of guarded iteration with
  guardedness interpreted as \emph{progressiveness} in time -- we
  motivate this modification by our intention to capture \emph{Zeno
    behaviour} in an arguably general and feasible way. We illustrate
  our results with a simple programming language for hybrid
  computation which is interpreted over the developed semantic
  foundations.
\end{abstract}

\section{Introduction}


Iteration is a basic concept of computer science that takes different
forms across numerous strands, from formal languages, to process
algebras and denotational semantics. From a categorical point of view, using the
definite perspective of Elgot~\cite{Elgot75}, iteration is an operator
\begin{ceqn}
\begin{align}
  \label{eq:elgot-iter}
  \anonrule{}{f:X\to Y+X}{f^{\istar}:X\to Y}
\end{align}
\end{ceqn}
\noindent
that runs the function $f$ and terminates if the result is in~$Y$,
otherwise it proceeds with the result repetitively.
%
One significant difficulty in the unification of various forms of
iteration is that the latter need not be total, but can be defined only
for a certain class of morphisms whose definition depends on the
nature of the specific example at hand.  In process algebra, for
example, one typically considers recursive solutions of \emph{guarded}
process definitions, in complete metric spaces only fixpoints of
\emph{contractive} maps (which can then be found uniquely thanks to
\emph{Banach's fixpoint theorem}), and in domain theory only least
fixpoints over \emph{pointed predomains} (i.e.\ \emph{domains}). These
examples have recently been shown as instances of the unifying notion of
\emph{guarded traced
  category}~\cite{GoncharovSchroderEtAl17,GoncharovSchroder18}.

In this work we aim to extend the stock of examples of this notion by
including iteration on \emph{hybrid computation}, which are encoded
in the recently introduced hybrid
monad~\mbox{\cite{NevesBarbosaEtAl16,neves18}}. We argue that in the
hybrid context guardedness corresponds to \emph{progressiveness} --
the property of trajectories to progressively extend over time during the
iteration process (possibly converging to a finite trajectory in the
limit) -- we illustrate and examine the corresponding iteration operator
and use it to develop while-loops for hybrid denotational semantics.


\begin{wrapfigure}[13]{r}{0.45\textwidth}
\vspace{0pt}
  \centering
\begin{tikzpicture}
\begin{axis}[xmin=0, xmax=4, ymin=0, ymax=1.1,width=.45\textwidth]

\def\gconst{9.8}
\def\cconst{.8}
\def\xconst{1}

\foreach \x in {0, ..., 20} {
  \def\lp{(1 - \cconst^\x) * (1 + \cconst) * sqrt(2 * \xconst / \gconst) / (1 - \cconst)}
  \def\mp{\lp +  \cconst^\x * sqrt(2 * \xconst / \gconst)}
  \def\rp{(1 - \cconst^(\x+1)) * (1 + \cconst) * sqrt(2 * \xconst / \gconst) / (1 - \cconst)}

  \edef\temp{\noexpand\addplot[blue, thick, smooth, domain=\lp:\mp, samples=10] {\xconst * \cconst^(2 * \x) -\gconst * (x - \lp)^2 /2};}\temp
  \edef\temp{\noexpand\addplot[blue, thick, smooth, domain=\mp:\rp, samples=10] { - (2 * \cconst + 1) * \xconst * \cconst^(2 * \x) + (\cconst + 1) * (x - \lp) * sqrt(2 * \gconst * \xconst * \cconst^(2 * \x)) -\gconst * (x - \lp)^2 /2};}\temp
}
\end{axis}
\end{tikzpicture}
\vspace{-0pt}
  \caption{Bouncing ball's movement.}\label{fig:bouncing}
\vspace{-0pt}
\end{wrapfigure}

Hybrid computation is inherent to systems that combine
discrete and continuous, \emph{physical
  behaviour}~\cite{Tabuada09,Platzer10, Alur15}.
Traditionally qualified as \emph{hybrid} and born in the context of
control theory~\cite{witsenhausen66}, they range from
computational devices interacting with their physical, external
environment to chemical/biological reactions and physical processes
that are subjected to discrete changes, such as combustions and
impacts. Typical examples include pacemakers, cellular division
processes, cruise control systems, and electric/water grids. Let us
consider, for example, the following \emph{hybrid program}, written in
an algebraic programming style, and with
$(\dot{\mathtt{x}} = \mathtt{t} \> \& \> \mathtt{r})$ denoting `let variable
$\mathtt{x}$ evolve according to~$\mathtt{t}$ during $\mathtt{r}$ milliseconds'.
\begin{ceqn}
\begin{align*}
  \prog[(\dot{v} = 1 \> \& \> 1)  +_{v \leq 120}
  (\dot{v} = -1 \> \& \> 1)]
\end{align*}
\end{ceqn}
\noindent
It represents a (simplistic) cruise controller that either accelerates
$(\dot{\mathtt{v}} = 1 \> \& \> 1)$ or brakes
$(\dot{\mathtt{v}} = -1 \> \& \> 1)$ during one millisecond depending
if the car's velocity $\mathtt{v}$ is lower or greater than
120km/h. This program naturally fits in a slightly more
sophisticated scenario obtained by wrapping a non-terminating
while-loop around it:
\begin{ceqn}
  \begin{align}\label{eq:while-example}
  \mathsf{while}  \> \mathsf{true} \> \{
  (\dot{\mathtt{v}} = 1 \> \& \> 1)  +_{\mathsf{v} \leq 120}
  (\dot{\mathtt{v}} = -1 \> \& \> 1) \}
\end{align}
\end{ceqn}
%
Now the resulting program runs \emph{ad infinitum}, measuring the
car's velocity every millisecond and changing it as specified by the
if-then-else condition. \emph{How should we systematically interpret
  such while-loops?}

Iteration on hybrid computation is notoriously difficult to handle
due to the so called \emph{Zeno behaviour}
\cite{hofner_phdthesis,ames05,zhang2001}, a phenomenon of unfolding an
iteration loop infinitely often in finite time, akin to the scenarios
famously described by the greek philosopher Zeno, further analysed by
Aristotle~\cite[Physics, 
231a–241b]{Aristotle08},
and since then by many
others. To illustrate this, consider a bouncing ball dropped at a
positive height and with no initial velocity. Due to the gravitational
acceleration $\mathsf{g}$, it falls into the ground and bounces back
up, losing a portion of its kinetic energy. In order to model this
system, one can start by writing the program,
\begin{ceqn}
\begin{align}
\label{bouncing:ball}
 (\dot{\mathsf{p}} = \mathsf{v}, \dot{\mathsf{v}} = \mathsf{g} \> \& \> \mathsf{p}
 \leq 0 \wedge \mathsf{v} \leq 0) ;  (\mathsf{v} := \mathsf{v} \times -0.5)
\end{align}
\end{ceqn}
\noindent
to specify the (continuous) change of height $\mathsf{p}$, and also the
(discrete) change of velocity $\mathsf{v}$ when the ball touches the
ground; the expression $\mathsf{p} \leq 0 \wedge \mathsf{v} \leq 0$
provides the termination condition: the ball stops when both its height and
velocity do not exceed zero. Then,
abbreviating program~\eqref{bouncing:ball} to $\mathsf{b}$, one
writes,
\begin{ceqn}
\begin{align*}
  (\mathsf{p := 1, v := 0}) ;
  \underbrace{\mathsf{b ;  \dots ;  b}}_{\text{n times}}
\end{align*}
\end{ceqn}
as the act of dropping the ball and letting it bounce exactly n times.
One may also wish to drop the ball and let it bounce \emph{until it
stops} (see Fig\,\ref{fig:bouncing}), using some form of infinite
iteration on~$\mathsf{b}$ and thus giving rise to  Zeno
behaviour. 
Only a few existing approaches aim to systematically work with Zeno
behaviour, e.g.\ in~\cite{hofner_phdthesis,HofnerMoller11} this is
done by relying on non-determinism, although the results seem to
introduce undesirable behaviour in some occasions (see details in the
following subsection).  Here, we do regard Zeno behaviour as an
important phenomenon to be covered and as such helping to design and
classify notions of iteration for hybrid semantics in a systematic and
compelling way.


\subsection{Related Work, Contributions, Roadmap, and Notation}
There exist two well-established program semantics for hybrid systems:
H\"ofner's `\emph{Algebraic calculi for hybrid systems}'
\cite{hofner_phdthesis} where programs are interpreted as sets of
trajectories, and Platzer's Kleene algebra \cite{Platzer10}
interpreting programs as maps $X \to \PSet X$ for the powerset
functor~$\PSet $. Both approaches are inherently non-deterministic and the
corresponding iteration operators crucially rely on
non-determinism. In~\cite{Platzer10}, the iteration
operator is modelled  by the Kleene star $(\argument)^\ast$, i.e.\ essentially by the
non-deterministic choice between all possible \emph{finite} iterates
of a given program $\mathsf{p}$; more formally, $\mathsf{p}^\ast$ is
the \emph{least fixpoint} of
\begin{ceqn}
\begin{align*}
    x \mapsto \mathsf{p} ;  x + \mathsf{skip}
\end{align*}
\end{ceqn}
\noindent
Semantics based on Kleene star deviates from the (arguably more
natural) intuition given above for the non-terminating
while-loop~\eqref{eq:while-example}.
It is also possible to extend the non-determinis\-tic perspective
summarised above to a more abstract setting via a monad that combines
hybrid computations and non-determinism~\cite{dahlqvist18}, but in the
present work we restrict ourselves to a \emph{purely hybrid setting},
in order to study genuinely hybrid computation in isolation, without
being interfered with other computational effects such as
non-determinism.
%

One peculiarity of the  Kleene star in~\cite{Platzer10} is that it is rather
difficult to use for modelling programs with Zeno behaviour, the
problem the authors are confronted with in \cite{hofner_phdthesis,HofnerMoller11}.
The authors of op.cit.\, extend the Kleene star setting with an
\emph{infinite iteration} operator $(\argument)^\omega$ that for a given
program~$\mathsf{p}$ returns the \emph{largest fixpoint} of the
function
\begin{ceqn}
\begin{align*}
    x \mapsto \mathsf{p} ;  x
\end{align*}
\end{ceqn}
on programs. As argued in~\cite{hofner_phdthesis,HofnerMoller11}, this operator still does not adequately
capture the semantics of hybrid iteration, as it yields `too much
behaviour', e.g.\ if $\mathsf{p}=\mathsf{skip}$, $\mathsf{p}^\omega$
is the program containing all trajectories while we are expecting it
to be $\mathsf{skip}$. This is fixed by combining various
techniques for obtaining a desirable set of behaviours, but unexpected
behaviour could still appear at the smallest instant of time that is
not reached by finite iterations
\cite{hofner_phdthesis,HofnerMoller11}. For the bouncing ball, this
entails that at the instant in which it is supposed to stop, it can
appear below ground or shoot up to the
sky. 

Other types of formalisms for hybrid systems were proposed in the last
decades, including e.g.\ the definite case of hybrid automata
\cite{hybridautomata}, whose distinguishing feature is the ability of
state variables to evolve continuously, and Hybrid \textsc{CSP}
\cite{hcsp}, an extension of \textsc{CSP} by expressions with time
derivatives. More recently, an elegant specification language handling
continuous behaviour of hybrid systems via \emph{non-standard
  analysis} was introduced in ~\cite{suenaga11}.

\smallskip
\noindent
\textbf{Contributions.} We propose semantic foundations for (Elgot)
iteration in a hybrid setting: we identify two new monads for hybrid
computations, one of which supports a partial guarded iteration
operator, characterized as a least solution of the corresponding
fixpoint equation, and another one extending the first and carrying a
total iteration operator, although not generally being characterized
in an analogous way. We show that both operators do satisfy the
standard equational principles of iteration
theories~\cite{BloomEsik93,Elgot75} together with
uniformity~\cite{SimpsonPlotkin00}. Moreover, we develop a language
for hybrid computation with full-fledged while-loops as a prominent
feature and interpret it using the underlying monad-based
semantics. We discuss various use case scenarios and demonstrate
various aspects of the iterative behaviour.

\smallskip
\noindent
\textbf{Plan of the paper.} We proceed by defining a simple
programming language for hybrid computation in
Section~\ref{sec:language}, in order to present and discuss challenges
related to defining a desirable semantics for it. In
Section~\ref{sec:guarded} we provide a summary of guarded (Elgot)
iteration theory. In Sections~\ref{sec:fist} and~\ref{sec:iter} we
present our main technical developments, including two new monads
$\BBH_\mplus$ and $\BBH$ for hybrid computation and the corresponding
iteration operators. In Section~\ref{sec:sem} we provide a semantics
for the while-loops of our programming language and then conclude in
Section~\ref{sec:concl}.

All omitted proofs can be found in the paper's appendix.

\smallskip
\noindent
\textbf{Notation.} 
We assume basic familiarity with the language of category theory
\cite{MacLane71}, monads \cite{MacLane71,Awodey2010}, and topology
\cite{Eng89}. Some conventions regarding notation are in
order. By $|\BC|$ we denote the class of objects of a category $\BC$
and by $\Hom_{\BC}(A,B)$ ($\Hom(A,B)$, if no confusion arises) the
set of morphisms $f:A\to B$ from $A\in |\BC|$ to $B\in|\BC|$. We
denote the set of \emph{Kleisli endomorphisms} $\Hom_{\BC}(X,TX)$ by $\End_{\BBT}(X)$.
We agree to omit indices at natural transformations. We identify monads with the
corresponding Kleisli triples, and use blackboard characters to refer
to a monad and the corresponding roman letter to the monad's
functorial part, e.g.\ $\BBT=(T,\eta,(\argument)^\klstar)$ denotes a
monad over a functor $T$ with $\eta:\Id\to T$ being the \emph{unit}
and $(\argument)^\klstar:\Hom(X,TY)\to \Hom(TX,TY)$ being the
corresponding \emph{Kleisli lifting}. Most of the time we work in the
category $\Set$ of sets and functions.
We write $\Rz$ and $\Rze$ for the sets of non-negative reals, and
non-negative reals extended with infinity~$\infty$
respectively. Given $e:\Rz\to X$ and $t\in\Rz$, we
denote by $e^t$ the application $e(t)$. Given $x\in X$,
$\const x:Y\to X$ is the function constantly equal to $x$. We
use if-then-else constructs of the form $\ite{p}{b}{q}$ returning
$p$ if $b$ evaluates to true and $q$ otherwise.

\section{A Simple Hybrid Programming Language}\label{sec:language}

Let us build a simple hybrid programming language to illustrate some
of our challenges and results. Intuitively, this language adds
differential equation constructs to the standard imperative features,
namely assignments, sequencing, and conditional branching.  It was
first presented in \cite[Chapter 3]{neves18} and we will use this
paper's results to extend it with a notion of iteration. We start by
recalling the definition of the hybrid monad~\cite{NevesBarbosaEtAl16}
here denoted by~$\BBH_0$, as a candidate semantic domain for this
language. In the following sections, we will extend~$\BBH_0$ in order
to obtain additional facilities for interpreting progressive and
hybrid iteration.
\begin{definition}[\cite{NevesBarbosaEtAl16}]\label{defn:H-monad}
The monad $\BBH_0$ on $\Set$ is defined in the following manner.
\begin{itemize}
  \item The set $H_0X$ has as elements the pairs $(d,e)$ 
    with $d\in\Rze$ and $e:\Rz\to X$ a function satisfying
    the \emph{flattening condition}: for every $x\geq d$,
    $e(x) = e(d)$. We call the elements of $(d,e)$ \emph{duration} and
    \emph{evolution}, respectively, and use the subscripts $\dr$ and~$\ev$
    to access the corresponding fields, i.e.\ given
    $f=(d,e)\in H_0X$, we mean $f_\dr$ and $f_\ev$ to denote $d$
    and~$e$ respectively.  This convention extends to Kleisli
    morphisms as follows: given $f:X\to H_0Y$,
    $f_\dr(x) = (f(x))_\dr$, $f_\ev(x) = (f(x))_\ev$.
  \item The unit is defined by $\eta(x) = (0,\const x)$, where
    $\const x$ denotes the constant trajectory on $x$;
  \item For every Kleisli morphism $f:X\to H_0Y$ and every value $(d,e)\in H_0X$,
\begin{align*} 
  (f^\klstar (d,e))_\dr = \ite{d + f_\dr(e^d)}{d\in\Rz}{\infty}\qquad\quad
  (f^\klstar (d,e))_\ev^t = \ite{f^0_\ev(e^t)}{t \leq d}{f_\ev^{t-d}(e^{d})}
\end{align*}
(recall that for a pair $f(x) = (d,e)$, according to our conventions,
$(f_\ev(x))^0$ refers to $(f_\ev (x))(0)$; here we additionally
simplify $(f_\ev (x))^0$ to $f^0_\ev (x)$ for the sake of
readability).
\end{itemize}
\end{definition}


\noindent
We now fix a finite set of real-valued variables
$X = \{x_1, \dots, x_n\}$ and denote by $\pAt(X)$ the set of atomic
programs given by the grammar,
\begin{align*}
  & \varphi \ni (x_1 := t, \dots, x_n := t) \mid
    (\dot{x}_1 = t, \dots, \dot{x}_n = t \> \& \> r) \mid
    (\dot{x}_1 = t, \dots, \dot{x}_n = t \> \& \> \psi),  \\
  &
  t \; \ni r \mid r \cdot x \mid t + t, \hspace{0.5cm}
  \psi \ni t \leq t \mid t \geq t \mid
    \psi \wedge \psi \mid \psi \vee \psi
\end{align*}
where $x \in X$ and $r \in \Rz$. The next step is to construct an
interpretation map,
\begin{align}\label{eq:atoms}
 \lsem \blank \rsem : \pAt(X) \to \End_{\BBH_0}(\Reals^n)
\end{align}
that sends atomic programs $\prog[a]$ to endomorphisms
$\lsem \prog[a] \rsem : \Reals^n \to H_0(\Reals^n)$ in the Kleisli
category of $\BBH_0$. This map extends to terms and predicates as
$\lsem t \rsem{(v_1,\dots,v_n)} \in \Reals^n$ and
$\lsem {\psi} \rsem \subseteq \Reals^n$ in the standard way by structural induction.
We interpret each assignment
$(x_1 := t, \dots, x_n := t)$ as the map,
\begin{flalign*}
  (v_1,\dots,v_n) \mapsto \eta_{\Reals^n} \left (\lsem {t_1} \rsem{(v_1,\dots,v_n)}
    ,\dots ,\lsem {t_n} \rsem{(v_1,\dots,v_n)} \right )
\end{flalign*}
\noindent
Recall that linear systems of ordinary differential equations
$\dot{x}_1 = t, \dots, \dot{x}_n = t$ always have unique solutions
$\phi: \Reals^n \to (\Reals^n)^{\Rz}$ \cite{perko13}. We use this
property to interpret each program
$(\dot{x}_1 = t, \dots, \dot{x}_n = t \> \& \> r)$ as the
respective solution $\Reals^n \to (\Reals^n)^{\Rz}$ but restricted to
$\Reals^n \to (\Reals^n)^{[0,r]}$. In order to interpret programs of
the type $(\dot{x}_1 = t, \dots, \dot{x}_n = t \> \& \> \psi)$ we can
call on the following
result. 
%
\begin{theorem}[\cite{dahlqvist18}]
Consider a program
$(\dot{x}_1 = t, \dots, \dot{x}_n = t \> \& \> \psi)$, the solution
$\phi : \Reals^n \times \Rz \to \Reals^n$ of the system
$\dot{x}_1 = t, \dots, \dot{x}_n = t$, and a valuation
$(v_1,\dots,v_n) \in \Reals^n$. If there exists a time instant
$r \in \Rz$ such that
$\phi(v_1,\dots,v_n,r) \in \lsem {\psi} \rsem$ then there exists a
\emph{smallest} time instant that also satisfies this condition.
\end{theorem}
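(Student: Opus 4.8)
The plan is to exploit the fact that the predicate language features only the \emph{non-strict} comparisons $t\leq t$ and $t\geq t$, together with finite conjunction and disjunction, so that $\lsem\psi\rsem\subseteq\Reals^n$ is topologically closed; combined with continuity of the ODE flow this forces the set of witnessing time instants to be a non-empty closed subset of $\Rz$, which is bounded below and hence attains a minimum.

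First I would prove, by structural induction on $\psi$, that $\lsem\psi\rsem$ is closed in $\Reals^n$. Each term $t$ denotes an affine — in particular continuous — function $\lsem t\rsem:\Reals^n\to\Reals$, so an atomic predicate $t_1\leq t_2$ denotes $\{v : \lsem t_1\rsem v\leq \lsem t_2\rsem v\}$, which is the preimage of the closed set $\{(a,b)\in\Reals^2 : a\leq b\}$ under the continuous map $v\mapsto(\lsem t_1\rsem v,\lsem t_2\rsem v)$, hence closed; the case $t_1\geq t_2$ is symmetric. For the inductive step, $\lsem\psi_1\wedge\psi_2\rsem=\lsem\psi_1\rsem\cap\lsem\psi_2\rsem$ and $\lsem\psi_1\vee\psi_2\rsem=\lsem\psi_1\rsem\cup\lsem\psi_2\rsem$ are closed, using that a finite union of closed sets is closed. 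It is precisely here that non-strictness of the comparisons and finiteness of the Boolean structure matter: with a strict inequality the set need not be closed and the statement would fail.

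Next I would recall that the solution $\phi(v_1,\dots,v_n,\blank):\Rz\to\Reals^n$ of a linear ODE system is smooth, in particular continuous (this is the classical existence-and-uniqueness theory already invoked). Hence the set $S=\{s\in\Rz : \phi(v_1,\dots,v_n,s)\in\lsem\psi\rsem\}$ is the preimage of the closed set $\lsem\psi\rsem$ under a continuous map, and is therefore closed in $\Rz$.

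Finally, by hypothesis $r\in S$, so $S$ is non-empty; and $S\subseteq\Rz=[0,\infty)$ is bounded below by $0$, so $s_0:=\inf S$ exists in $\Rz$. Since the infimum of a set is adherent to it and $S$ is closed, $s_0\in S$, i.e.\ $\phi(v_1,\dots,v_n,s_0)\in\lsem\psi\rsem$ while $s_0\leq s$ for every $s\in S$; this $s_0$ is the claimed smallest time instant. The argument is otherwise routine; the only step I would flag as requiring care is the closedness of $\lsem\psi\rsem$, which is exactly the point where the shape of the grammar for $\psi$ is used.
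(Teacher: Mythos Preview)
Your argument is correct and is the natural one: closedness of $\lsem\psi\rsem$ by structural induction on the predicate grammar (using that the atomic comparisons are non-strict and the Boolean connectives are finitary), continuity of the linear ODE flow, hence closedness of the hitting set $S\subseteq\Rz$, and finally that a non-empty closed subset of $[0,\infty)$ contains its infimum. Each step is sound, and you are right to single out the closedness of $\lsem\psi\rsem$ as the place where the specific shape of the grammar is used.

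As for comparison with the paper: the paper does not give its own proof of this theorem. It is stated with a citation to~\cite{dahlqvist18} and then immediately used to define the semantics of programs $(\dot{x}_1 = t,\dots,\dot{x}_n = t\ \&\ \psi)$. So there is nothing to compare your approach against within this paper; your proposal simply supplies the (standard) argument that the cited reference presumably contains.
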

\noindent
Using this theorem, we interpret each program
$(\dot{x}_1 = t, \dots, \dot{x}_n = t \> \& \> \psi)$ as the
function 
defined by,
\begin{flalign*}
  (v_1,\dots,v_n) \mapsto (d, \phi(v_1,\dots,v_n, -))
\end{flalign*}
\noindent
where $d$ is the smallest time instant that intersects
$\lsem{\psi}\rsem$ if
$(\img\ \phi(v_1,\dots,v_n,-)) \cap \lsem{\psi} \rsem \neq \emptyset$
and $\infty$ otherwise. This final step provides the desired
interpretation map of atomic programs~\eqref{eq:atoms}.

\noindent
We can now systematically build the hybrid programming language using
standard algebraic results, as observed in
\cite{dahlqvist18,neves18}. The set $\End_{\BBH_0}(\Reals^n)$ of endomorphisms
$\Reals^n \to H_0(\Reals^n)$ together with Kleisli composition
$\bullet$ and the unit $\eta : \Id \to H_0$ form a monoid
$(\End_{\BBH_0}(\Reals^n), \bullet, \eta)$. Therefore, the free monoidal
extension of
$\lsem \blank \rsem :\pAt(X) \to (\End_{\BBH_0}(\Reals^n), \bullet, \eta)$ is
well-defined and induces a semantics for program terms,
\begin{ceqn}
\[
  \prog[p
  = a\in At(X) \mid skip\mid p\sComp p]
\]
  \end{ceqn}
\begin{example}
Let us consider some programs written in this language.
\begin{enumerate}
\item We can have classic, discrete assignments, such as
  $\mathsf{x := x + 1}$ or $\mathsf{x:= 2 \cdot x}$, and their
  sequential composition.
\item We can also write a $\mathsf{wait(r)}$ call, frequently used in
  the context of embedded systems for making the system halt its
  execution during $\mathsf{r}$ time units. This is achieved with the
  program $(\dot{x}_1 = 0, \dots, \dot{x}_n = 0 \> \& \> \mathsf{r})$.
\item It is also possible to consider oscillators using histeresis
  \cite{goebel04}, in particular via the sequential composition
  $(\dot{\mathtt{x}} = 1 \> \& \> 1) \sComp (\dot{\mathtt{x}} = -1 \>
  \& \> 1)$.
\item The bouncing ball system that was examined in the introduction
  is another program of this language.
\end{enumerate}
\end{example}

\noindent
We next extend our language with if-then-else clauses.  This can be
achieved in the following manner. Denote by $B$ the free Boolean
algebra generated by the expressions $t = t$ and $t < t$.  Each
$b \in B$ induces an obvious predicate map $\sem{b} : \Reals^n \to 2$.

Any $b$ induces a binary function
$+_b : \End_{\BBH_0}(\Reals^n) \times \End_{\BBH_0}(\Reals^n) \to \End_{\BBH_0}(\Reals^n)$
defined as follows: $(f +_b g)(x) = \ite{f(x)}{b(x)}{g(x)}$.
This allows us to freely extend the
interpretation map,
\begin{flalign*}
\lsem \blank \rsem :\pAt(X) \to (\End_{\BBH_0}(\Reals^n), \bullet, \eta,
(+)_{b \in B})
\end{flalign*}
into a hybrid programming language with if-then-else clauses
$\prog[p +_{b \in B} p]$.

\begin{example} Let us consider some programs of this language
  with control decision features.
  \begin{enumerate}
  \item Aside from while-loops, our language carries the basic features of
    classic programs with discrete assignments, sequential
    composition, and if-then-else constructs.
  \item The (simplistic) cruise controller,
  $(\dot{\mathsf{v}} = 1 \> \& \> 1)  +_{\prog[v] \leq 120}
  (\dot{\mathsf{v}} = -1 \> \& \> 1)$
  discussed in the introduction is also a program of this
  language.
  \end{enumerate}
\end{example}

\noindent
To be able to address more complex behaviours we need some
means for forming iterative computations, such as \emph{while-loops}
\begin{ceqn}
\begin{flalign}\label{eq:while}
  \prog[ while ] \> \prog[ b ]  \> \prog[ \{ p \} ]
\end{flalign}
\end{ceqn}
This poses the main challenge of our present work, which is to give a
semantics of such constructs w.r.t.\ to a suitably designed
\emph{hybrid monad}. As a starting point, we refer
to~\cite{NevesBarbosaEtAl16,neves18} where $\BBH_0$ and an iteration
operator $(\argument)^\hash:\Hom(X,H_0X)\to\Hom(X,H_0X)$, which we
call \emph{basic iteration}, were introduced. One limitation of
this approach can already be read from the type profile:
$(\argument)^\hash$ can only interpret non-terminating loops, of the
form $\prog[ while ] \> \> \prog[ true ] \> \prog[ \{ p \} ]$.  The
semantics of $(\argument)^\hash$ in $\BBH_0$ is given by virtue of
metric spaces and Cauchy sequences, making difficult to identify the
corresponding domain of definiteness.  Here we take a different avenue
of introducing an Elgot iteration~\eqref{eq:elgot-iter}, for which, as
we shall see, the monad $\BBH_0$ must be modified. We then show
(in Section~\ref{sec:iter}) that basic iteration can be recovered,
albeit with a semantics subtly different from the one via $\BBH_0$.

\section{Guarded Monads and Elgot Iteration}\label{sec:guarded}
\begin{figure}[t!]
\begin{gather*}
\kern-1ex\textbf{(trv)}\quad\frac{f:X\to TY}{~(T\inj_1)\comp f:X\to_{\inj_2} T(Y+Z)~}\qquad
\textbf{(sum)}\quad\frac{~f:X\to_\sigma TZ\qquad~g:Y\to_\sigma TZ}
{~[f,g]:X+Y\to_\sigma TZ}\quad\\[3ex]
\kern-1ex\textbf{(cmp)}\quad\frac{~f:X\to_{\inj_2} T(Y+Z)\qquad g:Y\to_{\sigma} TV\qquad h:Z\to TV~}{[g,h]^\klstar\comp f:X\to_{\sigma} TV} \qquad
\end{gather*}
%
\caption{Axioms of abstract guardedness.}
\label{fig:guard}
\end{figure}

We proceed to give the necessary definitions related to guardedness for
monads~\cite{GoncharovSchroderEtAl17}.  A monad~$\BBT$ (on $\Set$) is
\emph{(abstractly) guarded} if it is equipped with a notion of
guardedness, which is a relation between Kleisli morphisms
$f:X\to TY$ and injections $\sigma:Y'\ito Y$ closed under the rules in
Fig\,\ref{fig:guard} where $f:X\to_\sigma Y$ denotes the fact that $f$
and $\sigma$ are in the relation in question. In the sequel, we also
write $f:X\to_i TY$ for $f:X\to_{\inj_i} TY$. More generally, we use
the notation $f:X\to_{p,q,\ldots} TY$ to indicate guardedness in the
union of injections $\inj_p, \inj_q,\ldots$ where $p,q,\ldots$ are
sequences over $\{1,2\}$ identifying the corresponding coproduct
summand in~$Y$. For example, we write $f:X\to_{12,2} T((Y+Z)+Z)$ to
mean that $f$ is $[\inj_1 \comp \inj_2,\inj_2]$-guarded.


\begin{definition}[Guarded Elgot monads]\label{def:g-elgot}
A monad $\BBT$ is a \emph{guarded Elgot monad} if it is equipped with a \emph{guarded
iteration operator},
\begin{align*}
  (f:X\to_2 T(Y+X))\mto (f^\istar:X\to TY)
\end{align*}
satisfying the following laws:
\begin{itemize}
  \item\emph{fixpoint law}: $f^\istar = [\eta,f^\istar]^\klstar\comp f$;
  \item\emph{naturality:} $g^{\klstar} f^{\istar} = ([(T\inl) \comp g, \eta\inr]^{\klstar} \comp f)^{\istar}$ for $f:X\to_2 T(Y+X)$, $g : Y \to TZ$;
  \item\emph{codiagonal:} $(T[\id,\inr] \comp f)^{\istar} = f^{\istar\istar}$ for  $f : X \to_{12,2} T((Y + X) + X)$;
  \item\emph{uniformity:} $f \comp h = T(\id+ h) \comp g$ implies
	$f^{\istar} \comp h = g^{\istar}$ for $f: X \to_2 T(Y + X)$, $g: Z \to_2 T(Y + Z)$ and
	$h: Z \to X$.
\end{itemize}
We drop the adjective `guarded' for guarded Elgot monads for which guardedness is total,
i.e.\ $f:X\to_\sigma TY$ for any $f:X\to TY$ and $\sigma$.
%
%
\end{definition}
\begin{figure}[t!]
  \tikzset{
    font=\tiny,
    nonterminal/.style={
      rectangle,
      minimum size=6mm,
      very thick,
      draw=orange!50!black!50,         
      fill=orange!50!white,
      font=\itshape
    },
    terminal/.style={
      scale=.5,
      circle,
      inner sep=0pt,
      thin,draw=black!50,
      top color=white,bottom color=black!20,
      font=\ttfamily
    },
    iterated/.style={
      fill=green!20,
      thick,
      draw=green!50
    },
    natural/.style={
      circle,
      minimum size=4mm,
      inner sep=2pt,
      thin,draw=black!50,
      top color=white,bottom color=black!20,
      font=\ttfamily},
    skip loop/.style={to path={-- ++(0,#1) -| (\tikztotarget)}},
    o/.style={
      shorten >=#1,
      decoration={
        markings,
        mark={
          at position 1
          with {
            \fill[black!55] circle [radius=#1];
          }
        }
      },
      postaction=decorate
    },
    o/.default=2.5pt,
    p/.style={
      shorten <=#1,
      decoration={
        markings,
        mark={
          at position 0
          with {
            \fill[black!55] circle [radius=#1];
          }
        }
      },
      postaction=decorate
    },
    p/.default=2.5pt
  }

  {
    \tikzset{nonterminal/.append style={text height=1.5ex,text depth=.25ex}}
    \tikzset{natural/.append style={text height=1.5ex,text depth=.25ex}}
  }
  \captionsetup[subfigure]{labelformat=empty,justification=justified,singlelinecheck=false}
  \pgfdeclarelayer{background}
  \pgfdeclarelayer{foreground}
  \pgfsetlayers{background,main,foreground}
    \begin{subfigure}{\textwidth}
    \centering
    \caption{Fixpoint:}
    \vspace{-2ex}
    \raisebox{-.5\height}{
    \begin{tikzpicture}[
      point/.style={coordinate},>=stealth',thick,draw=black!50,
      tip/.style={->,shorten >=0.007pt},every join/.style={rounded corners},
      hv path/.style={to path={-| (\tikztotarget)}},
      vh path/.style={to path={|- (\tikztotarget)}},
      text height=1.5ex,text depth=.25ex 
      ]
      \node [nonterminal] (f) {$f$};
      \draw [<-] (f.west) -- +(-1,0) node [midway,above] {$X$};
      \path [o,<-,draw] ($(f.west)+(-0.5,0)$) -- +(0,-0.8) -| ($(f.east)+(0.5,-0.15)$) node [pos=0.8,right] {$X$} -- +(-0.5,0);
      \draw [->] (f.east)++(0,0.15) -- +(1,0) node [midway,above] {$Y$};
      
      \begin{pgfonlayer}{background}
        \draw [iterated] ($(f.north west)+(-0.25,0.25)$) rectangle ($(f.south east)+(0.25,-0.25)$);
      \end{pgfonlayer}
    \end{tikzpicture}
    }
    ~~=~~
    \raisebox{-.5\height}{
    \begin{tikzpicture}[
      point/.style={coordinate},>=stealth',thick,draw=black!50,
      tip/.style={->,shorten >=0.007pt},every join/.style={rounded corners},
      hv path/.style={to path={-| (\tikztotarget)}},
      vh path/.style={to path={|- (\tikztotarget)}},
      text height=1.5ex,text depth=.25ex 
      ]
      \node [nonterminal] (f) {$f$};
      \node [nonterminal] (f2) at ($(f.east)+(1.5,-0.15)$) {$f$};
      \draw [<-] (f.west) -- +(-1,0) node [midway,above] {$X$};
      \draw [p,->] (f.east)++(0,-0.15) -- (f2.west) node [pos=0.25,below] {$X$};
      \path [o,<-,draw] (f2.west)++(-0.5,0) -- ++(0,-0.8) -| ($(f2.east)+(0.5,-0.15)$) node [near end,right] {$X$} -- +(-0.5,0);
      \draw [->] (f.east)++(0,0.15) -- node [midway,above] {$Y$} ++(0.5,0) -- ++(0,0.5) -- ++(3,0);
      \draw [->] (f2.east)++(0,0.15) -- ++(0.5,0) -- node [midway,right] {$Y$} ++(0,0.65);

      \begin{pgfonlayer}{background}
        \draw [iterated] ($(f2.north west)+(-0.25,0.25)$) rectangle ($(f2.south east)+(0.25,-0.25)$);
      \end{pgfonlayer}
    \end{tikzpicture}
    }
  \end{subfigure}
  \par
  \begin{subfigure}{\textwidth}
    \centering
    \caption{Naturality:}
    \raisebox{-.5\height}{
    \begin{tikzpicture}[
      point/.style={coordinate},>=stealth',thick,draw=black!50,
      tip/.style={->,shorten >=0.007pt},every join/.style={rounded corners},
      hv path/.style={to path={-| (\tikztotarget)}},
      vh path/.style={to path={|- (\tikztotarget)}},
      text height=1.5ex,text depth=.25ex 
      ]
      \node [nonterminal] (f) {$f$};
      \node [nonterminal] (g) at ($(f.east)+(1.5,0.15)$) {$g$};
      \draw [<-] (f.west) -- +(-1,0) node [midway,above] {$X$};
      \path [o,<-,draw] (f.west)++(-0.5,0) -- ++(0,-0.8) -| ($(f.east)+(0.5,-0.15)$) node [near end,right] {$X$}  -- ++(-0.5,0);
      \draw [->] ($(f.east)+(0,0.15)$) -- (g) node [midway,above] {$Y$};
      \draw [->] (g.east) -- +(1,0) node [midway,above] {$Z$};
      \begin{pgfonlayer}{background}
        \draw [iterated] ($(f.north west)+(-0.25,0.25)$) rectangle ($(f.south east)+(0.25,-0.25)$);
      \end{pgfonlayer}
    \end{tikzpicture}
    }
    ~~=~~
    \raisebox{-.5\height}{
    \begin{tikzpicture}[
      point/.style={coordinate},>=stealth',thick,draw=black!50,
      tip/.style={->,shorten >=0.007pt},every join/.style={rounded corners},
      hv path/.style={to path={-| (\tikztotarget)}},
      vh path/.style={to path={|- (\tikztotarget)}},
      text height=1.5ex,text depth=.25ex 
      ]
      \node [nonterminal] (f) {$f$};
      \node [nonterminal] (g) at ($(f.east)+(1.5,0.15)$) {$g$};
      \draw [<-] (f.west) -- +(-1,0) node [midway,above] {$X$};
      \path [o,<-,draw] (f.west)++(-0.5,0) 
        -- ++(0,-0.75) 
        -| ($(g.east)+(0.5,-0.5)$) node [near end,right] {$X$} 
        -| ($(f.east)+(0.6,-0.15)$) 
        -- ++(-0.6,0);
      \draw [->] ($(f.east)+(0,0.15)$) -- (g) node [midway,above] {$Y$};
      \draw [->] (g.east) -- +(1,0) node [midway,above] {$Z$};
      \begin{pgfonlayer}{background}
        \draw [iterated] ($(f.north west)+(-0.25,0.25)$) rectangle ($(g.south east)+(0.25,-0.35)$);
      \end{pgfonlayer}
    \end{tikzpicture}
    }
  \end{subfigure}
  \par\medskip
  \begin{subfigure}{\textwidth}
    \centering
    \vspace{1ex}
    \caption{Codiagonal:}
    \vspace{-1ex}
    \raisebox{-.5\height}{
    \begin{tikzpicture}[
      point/.style={coordinate},>=stealth',thick,draw=black!50,
      tip/.style={->,shorten >=0.007pt},every join/.style={rounded corners},
      hv path/.style={to path={-| (\tikztotarget)}},
      vh path/.style={to path={|- (\tikztotarget)}},
      text height=1.5ex,text depth=.25ex 
      ]
      \node [nonterminal,minimum height=1.2cm] (f) {$g$};
      \draw [<-] (f.west) -- +(-1,0) node [midway,above] {$X$};
      \draw [->] (f.east)++(0,0.4) -- ++(1.5,0) node [pos=0.3,above] {$Y$};
      \draw [p,->] (f.east) -- ++(1,0) --node [midway,right] {$X$}  ++(0,-1.1) -| ($(f.west)+(-0.5,0)$);
      \draw [p,->] (f.east)++(0,-0.4) -- node [midway,below] {$X$} ++(0.5,0) -- ++(0,0.4);
      \begin{pgfonlayer}{background}
        \draw [iterated] ($(f.north west)+(-0.25,0.25)$) rectangle ($(f.south east)+(0.75,-0.25)$);
      \end{pgfonlayer}
    \end{tikzpicture}
    }
    ~~=~~
    \raisebox{-.5\height}{
    \begin{tikzpicture}[
      point/.style={coordinate},>=stealth',thick,draw=black!50,
      tip/.style={->,shorten >=0.007pt},every join/.style={rounded corners},
      hv path/.style={to path={-| (\tikztotarget)}},
      vh path/.style={to path={|- (\tikztotarget)}},
      text height=1.5ex,text depth=.25ex 
      ]
      \node [nonterminal,minimum height=1cm] (f) {$g$};
      \draw [<-] (f.west) -- +(-1.6,0) node [pos=0.7,above] {$X$};
      \draw [->] (f.east)++(0,0.3) -- ++(1.5,0) node [pos=0.3,above] {$Y$};
      \draw [p,->] (f.east) -- ++(1.1,0) -- node [midway,right] {$X$} ++(0,-1.35) -| ($(f.west)+(-0.95,0)$);
      \draw [p,->] (f.east)++(0,-0.3) -- ++(0.5,0) -- node [midway,right] {$X$} ++(0,-0.65) -| ($(f.west)+(-0.5,0)$);
      \begin{pgfonlayer}{background}
        \draw [iterated] ($(f.north west)+(-0.75,0.35)$) rectangle ($(f.south east)+(0.85,-0.6)$);
        \draw [iterated,fill=green!55] ($(f.north west)+(-0.25,0.2)$) rectangle ($(f.south east)+(0.25,-0.25)$);
      \end{pgfonlayer}
    \end{tikzpicture}
    }
  \end{subfigure}
  \par
  \begin{subfigure}{\textwidth}
    \caption{Uniformity:}
    \centering
\begin{tabular}{rcl}
   \raisebox{-.5\height}{
    \begin{tikzpicture}[
      point/.style={coordinate},>=stealth',thick,draw=black!50,
      tip/.style={->,shorten >=0.007pt},every join/.style={rounded corners},
      hv path/.style={to path={-| (\tikztotarget)}},
      vh path/.style={to path={|- (\tikztotarget)}},
      text height=1.5ex,text depth=.25ex 
      ]
      \node [nonterminal,fill=blue!20,draw=blue!50] (h) {$h$};
      \node [nonterminal] (f) at ($(h.east)+(1.5,0)$) {$f$};
      \draw [<-] (h.west) -- +(-1,0) node [midway,above] {$Z$};
      \draw [->] (h.east) -- (f.west) node [midway,above] {$X$};
      \draw [->] (f.east)++(0,0.15) -- +(1,0) node [midway,above] {$Y$};
      \draw [p,->] (f.east)++(0,-0.15) -- +(1,0) node [midway,below] {$X$};
    \end{tikzpicture}
    }
    &$~~=~~$&
    \raisebox{-.5\height}{
    \begin{tikzpicture}[
      point/.style={coordinate},>=stealth',thick,draw=black!50,
      tip/.style={->,shorten >=0.007pt},every join/.style={rounded corners},
      hv path/.style={to path={-| (\tikztotarget)}},
      vh path/.style={to path={|- (\tikztotarget)}},
      text height=1.5ex,text depth=.25ex 
      ]
      \node [nonterminal] (f) {$g$};
      \node [nonterminal,fill=blue!20,draw=blue!50] (h) at ($(f.east)+(1.5,-0.15)$) {$h$};
      \draw [<-] (f.west) -- +(-1,0) node [midway,above] {$Z$};
      \draw [p,->] (f.east)++(0,-0.15) -- (h.west) node [midway,below] {$Z$};
      \draw [->] (f.east)++(0,0.15) -- ++(0.65,0) -- ++(0,0.4) node [midway,left] {$Y$} -- ++(2.35,0);
      \draw [->] (h.east) -- +(1,0) node [midway,above] {$X$};
    \end{tikzpicture}
    }\\
    &\Large{$\Downarrow$}&\\
    \raisebox{-.5\height}{
    \begin{tikzpicture}[
      point/.style={coordinate},>=stealth',thick,draw=black!50,
      tip/.style={->,shorten >=0.007pt},every join/.style={rounded corners},
      hv path/.style={to path={-| (\tikztotarget)}},
      vh path/.style={to path={|- (\tikztotarget)}},
      text height=1.5ex,text depth=.25ex 
      ]
      \node [nonterminal,fill=blue!20,draw=blue!50] (h) {$h$};
      \node [nonterminal] (f) at ($(h.east)+(1.5,0)$) {$f$};
      \draw [<-] (h.west) -- +(-1,0) node [midway,above] {$Z$};
      \draw [->] (h.east) -- (f.west) node [midway,above] {$X$};
      \draw [->] (f.east)++(0,0.15) -- +(1,0) node [midway,above] {$Y$};
      \path [o,<-,draw] (f.west)++(-0.5,0) -- ++(0,-0.8) -|
      ($(f.east)+(0.5,-0.15)$) node [pos=0.8,right] {$X$} -- ++(-0.5,0);
      \begin{pgfonlayer}{background}
        \draw [iterated] ($(f.north west)+(-0.25,0.25)$) rectangle ($(f.south east)+(0.25,-0.25)$);
      \end{pgfonlayer}
    \end{tikzpicture}
    }
    &$~~=~~$&
    \raisebox{-.5\height}{
    \begin{tikzpicture}[
      point/.style={coordinate},>=stealth',thick,draw=black!50,
      tip/.style={->,shorten >=0.007pt},every join/.style={rounded corners},
      hv path/.style={to path={-| (\tikztotarget)}},
      vh path/.style={to path={|- (\tikztotarget)}},
      text height=1.5ex,text depth=.25ex 
      ]
      \node [nonterminal] (f) {$g$};
      \draw [<-] (f.west) -- +(-1,0) node [midway,above] {$Z$};
      \path [o,<-,draw] ($(f.west)+(-0.5,0)$) -- +(0,-0.8) -| ($(f.east)+(0.5,-0.15)$) node [pos=0.8,right] {$Z$} -- +(-0.5,0);
      \draw [->] (f.east)++(0,0.15) -- +(1,0) node [midway,above] {$Y$};
      
      \begin{pgfonlayer}{background}
        \draw [iterated] ($(f.north west)+(-0.25,0.25)$) rectangle ($(f.south east)+(0.25,-0.25)$);
      \end{pgfonlayer}
    \end{tikzpicture}
    }
  \end{tabular}
  \end{subfigure}
\vspace{2ex}
\caption{Axioms of guarded iteration.}
\label{fig:ax}
\end{figure}


\noindent
The notion of guarded monad is a common generalisation of various
cases occurring in practice. Every monad can be equipped with a least
notion of guardedness, called~\emph{vacuous guardedness} and defined
as follows: $f:X\to_2 T(Y+Z)$ iff $f$ factors through
$T\inl:T Y\to T(Y+Z)$. Every vacuously guarded monad is guarded Elgot,
for every fixpoint $f^\istar$ unfolds precisely once
\cite{GoncharovSchroderEtAl17}. On the other hand, the greatest notion
of guardedness is \emph{total guardedness} and is defined as follows:
$f:X\to_2 T(Y+Z)$ for every $f:X\to T(Y+Z)$.  This addresses
\emph{total iteration} operators on $\BBT$ (e.g.\ for $\BBT$ being
Elgot), whose existence depends on special properties of~$\BBT$, such
as being enriched over complete partial orders. Motivating examples,
however, are those properly between these two extreme situations,
e.g.\ \emph{completely iterative monads}~\cite{Milius05} for which the
notion of guardedness is defined via monad modules and the iteration
operator is partial, but uniquely satisfies the fixpoint law.

\begin{example}\label{expl:monad}
We illustrate the above concepts with the following simplistic examples.
\begin{enumerate}
\item The powerset monad $\PSet$ is Elgot, with the iteration operator sending
  $f:X\to\PSet(Y+X)$ to $f^\istar:X\to\PSet Y$ calculated as
  the least solution of the fixpoint law
  $f^\istar = [\eta,f^\istar]^\klstar f$.
\item An example of partial guarded iteration can be obtained from the
  previous clause by replacing $\PSet$ with the \emph{non-empty
    powerset monad}~$\PSet_{\mplus}$. The total iteration operator
  from the previous clause does not restrict to a total iteration
  operator on this monad, because empty sets can arise from solving
  systems not involving empty sets, e.g.\
  $\eta \comp \inr:1\to\PSet_{\mplus}(1+1)$ would not have a solution
  in this sense. However, it is easy to see that the iteration
  operator from the previous clause restricts to a guarded one for
  $\PSet$ with the notion of guardedness defined as follows:
  $f:X\to_2\PSet_{\mplus}(Y+X)$ iff for every element $x \in X$,
  $f(x)$ contains at least one element from $Y$.
\end{enumerate}
\end{example}
\noindent
The axioms of guarded Elgot monads are given in
Fig\,\ref{fig:ax} in an intuitive pictorial form. The shaded boxes
indicate the scopes of the corresponding iteration loops and bullets
attached to output wires express the corresponding guardedness
predicates. As shown in~\cite{GoncharovSchroderEtAl17}, other standard
principles such as \emph{dinaturality} and the \emph{Beki\'{c} law}
follow from this axiomatisation.

%
%
%

\section{A Fistful of Hybrid Monads}\label{sec:fist}
According to Moggi~\cite{Moggi91}, Kleisli morphisms can be viewed as
generalised functions carrying a computational effect, e.g.\
non-determinism, process algebra actions, or their combination.  In
this context, hybrid computations can be seen as computations extended
in time.
\begin{wrapfigure}[8]{r}[0pt]{.42\textwidth}
\vspace{0pt}
  \centering
\begin{tikzcd}[column sep = normal,row sep = large]
&
	\BBH_0\BBM
    \ar[dr,hookleftarrow,shift right=-.75ex,"\upsilon"]
    \ar[dr,->>,shift left=-.75ex,"\rho"'] &
\\
\BBH_{\mplus}
    \ar[ur,hookrightarrow, "\iota"]
    \ar[rr,hookrightarrow, "\rho\iota", dashed]     & &
\BBH
\end{tikzcd}
\vspace{-0pt}
  \caption{Connecting $\BBH_0\BBM$, $\BBH_\mplus$ and~$\BBH$}
  \label{fig:connection}
\vspace{-0pt}
\end{wrapfigure}

By definition, the pairs $(d,e)\in H_0X$ fall into two classes:
\emph{closed trajectories} with $d\neq\infty$ and \emph{open
  trajectories} with $d=\infty$. Due to the flattening condition (see
Definition~\ref{defn:H-monad}), closed trajectories are completely
characterized by their restrictions to $[0,d]$.  We proceed by
extending $\BBH_0$ to a larger monad that brings open trajectories
over arbitrary intervals $[0,d)$ with $d>0$ into play, and call the
resulting monad $\BBH_\mplus$. It is instrumental in our study to cope
with open trajectories, as in the presence of Zeno behaviour,
iteration might produce open trajectories $[0,d) \to X$ that we cannot
sensibly extend into $[0,d] \to X$ without assuming some structure on
$X$ \cite{hofner_phdthesis,HofnerMoller11,nakamura05}.  Furthermore,
we introduce a variant of~$\BBH_{\mplus}$, which we call $\BBH$ and
which extends the facilities of $\BBH_{\mplus}$ even further by
including the \emph{empty trajectory} $[0,0)\to X$ which will be used
to accommodate \emph{non-progressive divergent} computation (see
Remark~\ref{rem:onH}). As detailed in the sequel, the mere addition of
the empty trajectory does not exactly fit the bill -- it yields `gaps'
in trajectories, which makes no sense under the assumption that
computations cannot recover from divergence.  To fix this, $\BBH$ will
forbid the extension of computations over time after a divergence
occurs. The notation for~$\BBH$ and $\BBH_{\mplus}$ is chosen to be
suggestive, and is a reminiscent of $\PSet$ and~$\PSet_\mplus$ for the
powerset and the non-empty powerset monads as in
Example~\ref{expl:monad}.  Indeed, the analogy goes further, as in the
next section we show that $\BBH_\mplus$ supports guarded (progressive)
iteration,~$\BBH$ supports total iteration, and the former is a
restriction of the latter.


%
In order to develop $\BBH_\mplus$, we first introduce a partial
version of $\BBH_0$ that will greatly facilitate obtaining some of our
results. Essentially, this partial version amounts to the combination
of $\BBH_0$ with the \emph{maybe monad} $\BBM$. Recall that
$MX = X + 1$, that the unit of $\BBM$ is given by the left coproduct
injection $\inl:X\to X+1$, and that the Kleisli lifting sends
$f:X\to Y+1$ to $[f,\inr]:X+1\to Y+1$. We conventionally identify
Kleisli morphisms $X\to MY$ with partial functions from $X$ to $Y$ and
thus write $f(x)\dar$ to indicate that $f(x)$ is defined on $x$, i.e.\
$f(x) \neq\inr\star$.  Let
$\dom(f) = {\{x\in X\mid f(x)\dar\}}\subseteq X$ and let us denote
by~$\bot$ both $\inr\star\in X+1$ and the totally undefined function
$\const\bot$. Finally, we write $f(x)\uar$ as a shorthand notation to
$f(x)=\bot$. For the sake of readability, we will sometimes write the
composition of partial functions $(g + \id) f$ simply as $g f$.
We will also need the following result.
\begin{proposition}\label{prop:m-distr}
  Every monad $\BBT=(T,\eta,(\argument)^\klstar)$ induces a monad $\BBT\BBM$
  whose functor is defined by $X\mto TMX$, the unit by
  $\eta\inl:X\to TMY$, and the Kleisli lifting by
  $[f,\eta\inr]^\klstar:TMX\to TMY$ for every
  $f:X\to TMY$.
\end{proposition}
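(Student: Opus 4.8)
The plan is to verify the three Kleisli-triple laws for the proposed data $\BBT\BBM=(TM,\eta\inl,\overline{(\argument)})$, where for $f:X\to TMY$ I write $\overline f:=[f,\eta\inr]^\klstar:TMX\to TMY$ for the claimed lifting (here $\eta\inl$ abbreviates $\eta_{MX}\comp\inl_X:X\to TMX$ and $\eta\inr$ abbreviates $\eta_{MY}\comp\inr:1\to TMY$, as in the statement). The only ingredients are the Kleisli-triple laws for $\BBT$ together with the universal property of the binary coproducts $MX=X+1$; nothing is required of $\BBM$ beyond that.

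The two unit laws are immediate. For the left unit, $[\eta\inl,\eta\inr]=\eta_{MX}\comp[\inl,\inr]=\eta_{MX}$, so $\overline{\eta\inl}=\eta_{MX}^\klstar=\id_{TMX}$ by the left unit law of $\BBT$. For the right unit, the right unit law of $\BBT$ gives $[f,\eta\inr]^\klstar\comp\eta_{MX}=[f,\eta\inr]$, and precomposing with $\inl_X$ yields $\overline f\comp(\eta\inl)=f$.

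The only law needing a short computation is associativity: for $f:X\to TMY$ and $g:Y\to TMZ$ one must show $\overline{\overline g\comp f}=\overline g\comp\overline f$. Using the associativity law of $\BBT$ on the right-hand side, $\overline g\comp\overline f=[g,\eta\inr]^\klstar\comp[f,\eta\inr]^\klstar=\bigl([g,\eta\inr]^\klstar\comp[f,\eta\inr]\bigr)^\klstar=\bigl(\overline g\comp[f,\eta\inr]\bigr)^\klstar$, so it suffices to prove the equality of underlying morphisms $\overline g\comp[f,\eta\inr]=[\overline g\comp f,\eta\inr]:MX\to TMZ$. By the coproduct universal property the left-hand side equals $[\overline g\comp f,\ \overline g\comp\eta\inr]$, and $\overline g\comp\eta\inr=[g,\eta\inr]^\klstar\comp\eta_{MY}\comp\inr=[g,\eta\inr]\comp\inr=\eta\inr$, again by the right unit law of $\BBT$, which settles it. I do not expect a genuine obstacle here: every step is routine, and the only thing to watch is the bookkeeping of the coproduct injections. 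An alternative, even more mechanical route is to exhibit $\lambda_X=[T\inl_X,\eta_{MX}\comp\inr]:MTX\to TMX$ as a distributive law of the monad $\BBM$ over $\BBT$ and invoke the standard fact that $TM$ then inherits a monad structure whose unit and Kleisli lifting are exactly those in the statement; but checking the three laws directly, as above, is shorter.
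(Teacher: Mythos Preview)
Your proof is correct. Each of the three Kleisli-triple laws is verified cleanly, and the bookkeeping with the coproduct injections is handled correctly; in particular the associativity step $\overline g\comp\eta\inr=\eta\inr$ is exactly what is needed.

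By way of comparison, the paper does not carry out this verification at all: its proof is a one-line appeal to the standard fact that every monad distributes over the maybe monad, citing the literature. This is precisely the alternative route you mention at the end of your proposal. So you and the paper agree on the conceptual picture, but you choose the direct, self-contained check of the Kleisli laws as the primary argument, whereas the paper simply invokes the distributive-law machinery. Your approach has the advantage of being entirely elementary and not relying on the reader knowing (or looking up) how a distributive law induces a composite monad; the paper's approach is shorter and situates the result in a broader context, but is less self-contained.
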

\begin{proof}
  This is a consequence of the standard fact that every monad
  distributes over the maybe monad~\cite{ghani02}.
\end{proof}
\begin{definition}\label{defn:H-plus}
Let $\BBH_0\BBM$ be the monad identified in Proposition~\ref{prop:m-distr} with $\BBT=\BBH_0$.
Then let~$H_{\mplus}$ be the subfunctor of $H_0 M$ that is defined by,
\begin{align}\label{eq:constraint1}
(d,e)\in H_{\mplus} X \text{\quad iff\quad} e\neq\bot \text{\quad and\quad} e^t\dar \text{~ for all ~} t\in [0,d).
\end{align}
This yields a monad $\BBH_{\mplus}$, by restricting the monad
structure of $\BBH_0 \BBM$.  Explicitly, $\eta(x) = (0,\const x)$ and for
every $f:X\to H_{\mplus}Y$ and every $(d,e)\in H_{\mplus}X$,
\begin{flalign*}
\quad&  (f^\klstar (d,e))_\dr = d&&
  (f^\klstar (d,e))_\ev^t = \ite{f^0_\ev(e^t)}{t<d}{\bot} & \text{(if~~$e^d\uar$)}\\
\quad& (f^\klstar (d,e))_\dr = d + f_\dr(e^d)&&
  (f^\klstar (d,e))_\ev^t = \ite{f^0_\ev(e^t)}{t< d}{f_\ev^{t-d}(e^d)} & \text{(if~~$e^d\dar$)}
\end{flalign*}
\end{definition}
Note that the set $H_{\mplus} X$ consists precisely of elements
$(d,e)$ for which either $\dom(e) = \Rz$ or $\dom(e) = [0,d)$ and
$d>0$. Of course, we need to verify that Definition~\ref{defn:H-plus}
correctly introduces a monad.
\begin{proof}
  We only need to show that for every $f:X\to H_{\mplus}Y$,
  $(d,e)\in H_{\mplus} X$ implies that
  $f^\klstar(d,e)\in H_{\mplus}Y$. Let $t\in [0,(f^\klstar(d,e))_\dr)$
  and proceed by case distinction:
  \begin{citemize}
  \item $e^d\uar$. Then $(f^\klstar(d,e))_\dr= d$ and $t<d$. Since
    $(d,e) \in H_{\mplus} X$, the condition $e^t \dar$ holds and
    consequently $f^0_\ev(e^t) \dar$. Then since
    $(f^\klstar (d,e))_\ev^t = f^0_\ev(e^t) $, we have
    $(f^\klstar (d,e))_\ev^t \dar$ which proves our claim.

\item $e^d\dar$. Then $(f^\klstar (d,e))_\ev^t\dar$ iff either
  $t < d$ and $f^0_\ev(e^t)\dar$ or $t> d$ and
  $f_\ev^{t-d}(e^d)\dar$. In the former case we are done in the same
  way as in the previous clause. In the latter case, note that
  $t-d<(f^\klstar(d,e))_\dr - d = f_\dr(e^d)$, which by assumption
  implies that $f_\ev^{t-d}(e^d)\dar$.\qed
\end{citemize}
\noqed\end{proof}
The condition $e\neq\bot$ in~\eqref{eq:constraint1} is essential for
the construction above, for otherwise we cannot ensure that
computations with totally undefined trajectories are compatible with
Kleisli liftings, as detailed in Remark~\ref{rem:n-strict} below. Such
computations can be seen as representing unproductive or
non-progressive divergence since they do not progress in time. They
are required for the semantics of programs like
\begin{ceqn}
  \begin{align*}
  \mathsf{while}  \> \mathsf{true} \> \{
  {\mathtt{x}} := \mathtt{x} + 1 \}
\end{align*}
\end{ceqn}
We therefore need to extend $\BBH_{\mplus}$ to a larger monad $\BBH$
in which such divergent computations exist. Technically, this will amount
to quotienting the monad $\BBH_0\BBM$ in a suitable manner.
\begin{definition}\label{defn:H}
Let $\BBH_0\BBM$ be the monad identified in Proposition~\ref{prop:m-distr} with $\BBT=\BBH_0$
and let~$H$ be the subfunctor of $H_0 M$ formed as follows:
\begin{align}\label{eq:constraint2}
(d,e)\in H X &&\text{iff} && e \text{~ is total \quad or\quad} d=\infty\text{\quad and \quad} \dom e \text{~ is downward closed.}
\end{align}
The total trajectories included in $H$ must be understood in precisely
the same way as in $H_\mplus$ while the remaining trajectories fall
into two classes:
\begin{itemize}
  \item $(\infty,e)$ with $\dom e=[0,d)$ -- these correspond to the trajectories
$(d,e)$ of $\BBH_{\mplus}$, unless $d=0$, in which case we obtain a counterpart of
the empty trajectory not included in $\BBH_\mplus$;
  \item $(\infty,e)$ with $\dom e=[0,d]$ -- these trajectories behave analogously,
but have no counterparts in $\BBH_{\mplus}$.
\end{itemize}
Both these cases are meant to model divergent
behaviours, with the moment of divergence occurring either at the time
instant $d$ in the first case, or immediately after $d$ in the second
case.

Let $\upsilon$ be the inclusion of $H$ into $H_0M$ and let
$\rho:H_0M\to H$ be the natural transformation whose components are
defined by,
\begin{align*}
(\rho_X(d,e))_\dr = \ite{d}{\dom e = \Rz}{\infty}, \hspace{.5cm} (\rho_X(d,e))_\ev^t
= \ite{e^t}{t\leq d_\star}{(\ite{e^{d_\star}}{\dom e = \Rz}{\bot})}
\end{align*}
where $d_\star=\sup\{t<d\mid [0,t)\subseteq\dom e \}$. It is easy to
see that $\rho$ is a right inverse of $\upsilon$.

We extend $\BBH$ to a monad by defining $x\mto \rho(\eta(x))$ to be
the unit and the Kleisli lifting the map sending $f:X\to HY$ to
$\rho(\upsilon f)^\klstar\upsilon$. Explicitly, the monad structure on
$H$ is as follows:
$\eta(x) = (0,\const x)$ and for every $f:X\to HY$, and every
$(d,e)\in HX$, assuming that $D=\bigcup\,\bigl\{[0,t]\subseteq\dom f_\ev^0\comp e\mid [0,t]\subseteq\dom e\bigr\}$,
\begin{flalign*}
~~  (f^\klstar (d,e))_\dr =&\; d + f_\dr(e^d),& (f^\klstar (d,e))_\ev^t =&\; \ite{f^0_\ev(e^t)}{t \leq d}{f_\ev^{t-d}(e^d)}&
(\text{if~$D=\Rz$}) \\
~~  (f^\klstar (d,e))_\dr =&\; \infty, & (f^\klstar (d,e))_\ev^t =&\; \ite{f^0_\ev(e^t)}{t \in D}{\bot} & (\text{otherwise})
\end{flalign*}
\end{definition}
Like in the case of $\BBH_\mplus$
, we need
to verify that $\BBH$ is a monad  (see Appendix~\ref{a:H} for details).
\begin{remark}\label{rem:n-strict}
  As indicated above, $\BBH$ is a quotient of $\BBH_0\BBM$ and not a submonad,
  specifically~$\upsilon$ is not a monad morphism. Indeed, given $f:\Rz\to
  H\Rz$ such that $f(0) = (\infty,\bot)$ and $f(t) = (1,\const 1)$ for $t>0$,
  computing $f^\klstar (\infty,\id)$ w.r.t.\ $\BBH_0\BBM$ yields $(f^\klstar
  (\infty,\id))_\ev^0 = \bot$ and $(f^\klstar (\infty,\id))_\ev^t = 1$ for $t>0$,
  which does not belong to $H\Rz$.
\end{remark}
In summary, the monads $\BBH_0\BBM$, $\BBH_\mplus$, $\BBH$ are
connected as depicted in Fig\,\ref{fig:connection}. Here, $\iota$ and
$\rho$ are monad morphisms, and the induced composite morphism
$\rho\iota:\BBH_\mplus\to\BBH$ is pointwise injective.

\section{Progressive Iteration and Hybrid Iteration}\label{sec:iter}
We start off by equipping the monad $\BBH_{\mplus}$ from the previous section with a suitable notion of guardedness.
\begin{definition}[Progressiveness]
A Kleisli morphism $(d,e):X\to H_\mplus(Y+Z)$ is \emph{progressive} in $Z$ (in $Y$) if
$e^0:X\to Y+Z$ factors through $\inl$ (respectively, $\inr$).
\end{definition}
Given $(d,e):X\to H_\mplus(Y+X)$, progressiveness in $X$ means
precisely that $e^0=\inl u:X\to Y+X$ for a suitable $u:X\to Y$, which
is intuitively the candidate for $(d,e)^\istar_\ev$ at $0$. In other
words, progressiveness rules out the situations in which the iteration
operator needs to handle compositions of zero-length trajectories.

\begin{remark}\label{rem:onH}
A simple example of a morphism $(d,e):X\to H_\mplus(Y+X)$ not
progressive in~$X$ is obtained by taking $X=\{0,1\}$, $Y=\emptyset$,
$d=\const 0$ and $e^0 = \inr\oname{swap}$ where $\oname{swap}$
interchanges the elements of $\{0,1\}$.  In attempts of defining
$(d,e)^\istar$ we would witness oscillation between~$0$ and $1$
happening at time $0$, i.e.\ not progressing over time, which is
precisely the reason why there is no candidate semantic for
$(d,e)^\istar$ in this case.
\end{remark}
\begin{lemma}\label{lem:guard}
$\BBH_\mplus$ is a guarded monad with $f:X\to_2H_{\mplus}(Y+Z)$ iff $f$ is
progressive in $Z$.
\end{lemma}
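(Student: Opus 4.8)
The plan is to verify that the relation ``$f$ is progressive in $Z$'' (for $f:X\to H_\mplus(Y+Z)$) satisfies the three closure rules \textbf{(trv)}, \textbf{(sum)}, \textbf{(cmp)} of Fig\,\ref{fig:guard}. Since progressiveness is defined purely in terms of the value $e^0$ of the evolution at time $0$ — namely, $f:X\to_2 H_\mplus(Y+Z)$ iff $e^0 = f_\ev(-)^0$ factors through $\inl: Y\to Y+Z$ — this becomes a question about how the operations $T\inj_1$, copairing, and Kleisli lifting act on the ``time-zero slice'' of a trajectory. First I would record the key observation, readable directly from Definition~\ref{defn:H-plus}: for any $g:X\to H_\mplus Y$ and $(d,e)\in H_\mplus X$ we have $(g^\klstar(d,e))_\ev^0 = g^0_\ev(e^0)$ whenever $e^0\dar$ (and $e^0\dar$ always holds for elements of $H_\mplus X$, since $0\in[0,d)$ when $d>0$, and when $d=0$ the flattening/totality forces $e$ total). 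In particular the time-zero slice of a Kleisli composite factors as ``apply $g$ at time zero to the time-zero slice of the input''.

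For \textbf{(trv)}: if $f:X\to H_\mplus Y$ is arbitrary, then $(H_\mplus\inj_1)\comp f$ has time-zero slice $\inj_1\comp f_\ev(-)^0$, which factors through $\inl:Y\to Y+Z$ by construction, so it is progressive in $Z$, i.e.\ $(H_\mplus\inj_1)\comp f : X\to_2 H_\mplus(Y+Z)$. For \textbf{(sum)}: if $f:X\to_\sigma H_\mplus Z$ and $g:Y\to_\sigma H_\mplus Z$ are both progressive w.r.t.\ an injection $\sigma:Z'\ito Z$, then $[f,g]:X+Y\to H_\mplus Z$ has time-zero slice $[f_\ev(-)^0, g_\ev(-)^0]$, and a copairing factors through $\sigma$ precisely when both components do; hence $[f,g]$ is $\sigma$-progressive. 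For \textbf{(cmp)}: given $f:X\to_2 H_\mplus(Y+Z)$ (so $f_\ev(-)^0 = \inl\comp u$ for some $u:X\to Y$), $g:Y\to_\sigma H_\mplus V$, and $h:Z\to H_\mplus V$, the composite $[g,h]^\klstar\comp f$ has time-zero slice equal to $[g,h]^0_\ev$ applied to $f_\ev(x)^0 = \inl(u(x))$, which is $g^0_\ev(u(x))$; since $g$ is $\sigma$-progressive, $g^0_\ev(u(x))$ factors through $\sigma$, hence so does the whole composite, giving $[g,h]^\klstar\comp f : X\to_\sigma H_\mplus V$. Closure under arbitrary unions of injections (needed for the $\to_{p,q,\ldots}$ notation) is then automatic since ``factors through $\inl$'' is stable under enlarging the relevant summand.

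The main obstacle — and the one step deserving genuine care rather than a one-line remark — is the bookkeeping around definedness in the Kleisli lifting of $\BBH_\mplus$, i.e.\ making sure the identity $(g^\klstar(d,e))_\ev^0 = g^0_\ev(e^0)$ really holds on the nose for every $(d,e)\in H_\mplus X$. One must check the two cases in Definition~\ref{defn:H-plus} ($e^d\uar$ versus $e^d\dar$): in the case $e^d\dar$ with $d=0$, the formula gives $(g^\klstar(d,e))_\ev^0 = g_\ev^{0-0}(e^0) = g^0_\ev(e^0)$, consistent; in the case $d>0$ we are in the subcase $0<d$ of the ``$t<d$'' branch, again giving $g^0_\ev(e^t)$ at $t=0$; and the case $e^d\uar$ forces $d>0$ (else $e$ would be total and $e^d\dar$), so again $0<d$ and we land in the $g^0_\ev(e^0)$ branch. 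Once this slice formula is pinned down, the three rules reduce to the elementary fact that ``factoring a function through a coproduct injection'' is preserved by post-composition with an injection, by copairing, and by the particular composites appearing above; I would present this slice lemma explicitly and then dispatch \textbf{(trv)}, \textbf{(sum)}, \textbf{(cmp)} in three short paragraphs.
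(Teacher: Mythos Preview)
Your proposal is correct and follows essentially the same approach as the paper: verify the three closure rules \textbf{(trv)}, \textbf{(sum)}, \textbf{(cmp)} by inspecting the time-zero slice of the evolution. The paper's proof is terser --- it simply writes down the time-zero value in each case without isolating the ``slice formula'' $(g^\klstar(d,e))_\ev^0 = g^0_\ev(e^0)$ as a separate lemma or checking the definedness cases --- but the argument is the same, and your added care about the $e^d\uar$ versus $e^d\dar$ branches is a harmless (indeed welcome) elaboration rather than a different route.
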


\noindent
Instead of directly equipping $\BBH_{\mplus}$ with progressive iteration,
we take the following route: we enrich the monad $\BBH_0\BBM$ over
complete partial orders and devise a total iteration operator for it using
the standard least-fixpoint argument. Then we restrict iteration from
$\BBH_0\BBM$ to~$\BBH_\mplus$ via $\iota$ and to $\BBH$ via~$\upsilon$ (see
Fig\,\ref{fig:connection}). The latter part is tricky, because~$\upsilon$
is not a monad morphism (Remark~\ref{rem:n-strict}), and thus we will call
on the machinery of~\emph{iteration-congruent retractions}, developed
in~\cite{GoncharovSchroderEtAl17}, to derive a (total) Elgot iteration on
$\BBH$.

Consider the following order on $H_0M X$: for
$(d,e), (d_\star,e_\star)\in H_0M X$,
$(d,e)\appr(d_\star,e_\star)$ if
\begin{align*}
d\leq d_\star,~	e\leq e_\star \text{\quad and\quad} d\in\Rz,~e^d\dar \text{\quad imply\quad} d=d_\star
\end{align*}
%
where evolutions are compared as partial maps, i.e.\ $e\leq e_\star$ reads as
$\dom(e)\subseteq\dom(e_\star)$ and $e^t = e_\star^t$ for all $t\in\dom(e)$.

This order extends to the hom-sets $\Hom(X,H_0MY)$ pointwise.
\begin{theorem}\label{thm:H0M}
The following properties hold.
\begin{enumerate}
\item\label{item:H0M1} Every set $H_0MX$ is an $\omega$-complete partial order under
  $\appr$ with $(0,\bot)$ as the bottom element;
  \item\label{item:H0M2} Kleisli composition is monotone and continuous w.r.t.\ $\appr$ on both sides;
  \item\label{item:H0M3} Kleisli composition is right-strict, i.e.\ for every $f:X\to H_0MY$, $f^\klstar (0,\bot)= (0,\bot)$.
\end{enumerate}
\end{theorem}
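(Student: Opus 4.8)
The plan is to settle the three items in order: (3) will be a one-line computation, (2) reduces to a monotonicity/continuity lemma for the $\BBH_0$-Kleisli lifting, so the real work is item (1), which is in any case a prerequisite even for stating (2). For item (1) I would first verify that $\appr$ is a partial order: reflexivity and antisymmetry are immediate (the latter from antisymmetry of $\le$ on $\Rze$ and of the partial-map order), and the only delicate point in transitivity is that if $(d,e)\appr(d',e')$ with $d\in\Rz$ and $e^d\dar$, then not only $d=d'$, but also $e'^{d'}=e'^d=e^d\dar$, so $(d',e')$ again has a defined value at its finite duration and the side condition propagates to a further step. The pair $(0,\bot)$ is least since it trivially fails the premise of the side condition (finite duration, undefined endpoint). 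For $\omega$-completeness, given a chain $(d_0,e_0)\appr(d_1,e_1)\appr\cdots$ I claim its supremum is $(\sup_n d_n,\bigcup_n e_n)$, where $\bigcup_n e_n$ is the union of the $e_n$ as partial maps — well-defined, since $e_n\le e_{n+1}$ makes the family pairwise compatible. The crucial auxiliary fact is a \emph{freezing lemma}: if some member $(d_k,e_k)$ is \emph{closed}, i.e.\ $d_k\in\Rz$ and $e_k^{d_k}\dar$, then every later member is closed and $d_n=d_k$ for $n\ge k$; this follows from the side condition by a one-step induction and is precisely where the asymmetry of $\appr$ is used. Granting it, that $(\sup_n d_n,\bigcup_n e_n)$ is an upper bound and the least one is a routine unfolding, with the freezing lemma supplying the side condition whenever a chain member is closed.

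The step I expect to be the main obstacle is checking that $(\sup_n d_n,\bigcup_n e_n)$ again satisfies the \emph{flattening condition} of Definition~\ref{defn:H-monad}. I would split according to $d^{*}:=\sup_n d_n$: if $d^{*}=\infty$ the condition is vacuous; if $d^{*}\in\Rz$ is attained, say $d^{*}=d_N$, then the chain being non-decreasing gives $d_n=d^{*}$ for $n\ge N$, and the flattening of each such $e_n$ — constancy on $[d^{*},\infty)$ — passes to the union; and if $d^{*}\in\Rz$ is a strict limit of the $d_n$, then the freezing lemma forbids any $(d_n,e_n)$ from being closed, so $e_n^{d_n}\uar$, whence $e_n$ is totally undefined on $[d_n,\infty)$ and $\bigcup_n e_n$ is totally undefined on $[d^{*},\infty)$, so flattening holds trivially.

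For item (2): by Proposition~\ref{prop:m-distr}, Kleisli composition in $\BBH_0\BBM$ sends $f:X\to H_0MY$ and $g:Y\to H_0MZ$ to $x\mapsto h^\klstar(f(x))$, where $h=[g,\eta\inr]:Y+1\to H_0MZ$ — in particular $h(\inr\star)=(0,\bot)$ — and $(\argument)^\klstar$ is the Kleisli lifting of $\BBH_0$. Since $\appr$ on hom-sets is pointwise, monotonicity and continuity in $f$ reduce to: for fixed $h$ of this form, $(d,e)\mapsto h^\klstar(d,e)$ is monotone and $\omega$-continuous as a map $H_0MY\to H_0MZ$; and in $g$ they reduce to: for fixed $(d,e)$, $h\mapsto h^\klstar(d,e)$ is monotone and $\omega$-continuous for the pointwise order on $\Hom(Y+1,H_0MZ)$ — here using that $g\appr g'$, resp.\ $g$ the pointwise supremum of a chain $(g_n)$, implies $[g,\eta\inr]\appr[g',\eta\inr]$, resp.\ $[g,\eta\inr]$ the pointwise supremum of $([g_n,\eta\inr])$, the $\inr$-component being constant. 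Both monotonicity claims I would prove by the same three-way case split on the argument, reading off $(h^\klstar(d,e))_\dr=\ite{d+h_\dr(e^d)}{d\in\Rz}{\infty}$ and $(h^\klstar(d,e))_\ev^t=\ite{h^0_\ev(e^t)}{t\le d}{h_\ev^{t-d}(e^d)}$ from the lifting formula: (i) $d=\infty$; (ii) $d\in\Rz$ with $e^d\uar$, where $h_\dr(e^d)=h_\dr(\bot)=0$; (iii) $d\in\Rz$ with $e^d\dar$. The recurring point — and the reason the side condition of $\appr$ never obstructs the argument — is that $(h^\klstar(d,e))_\ev$ is defined at its own duration only in case (iii), and then only when the value $h(e^d)$ is itself closed (in the same sense: defined at its own finite duration), in which situation the $\appr$-hypotheses on the arguments already force the two durations to agree; in cases (i) and (ii) the side condition is vacuous. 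Continuity then follows from monotonicity together with the description of suprema from item (1) — a chain in $H_0MZ$ having as supremum the supremum of the durations paired with the union of the evolutions — by checking in each case that the duration and evolution of $h^\klstar(d^{*},e^{*})$ coincide with the supremum of the durations and the union of the evolutions of the $h^\klstar(d_n,e_n)$, splitting once more according to whether $d^{*}$ is infinite, a closed point, or a strict limit with undefined endpoint. The bulk of this item is sheer volume; the conceptual content sits entirely in item (1).

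For item (3): unfolding via Proposition~\ref{prop:m-distr}, $f^\klstar(0,\bot)=h^\klstar(0,\bot)$ with $h=[f,\eta\inr]$ and $(\argument)^\klstar$ the $\BBH_0$-lifting. Since $(0,\bot)$ has duration $0$ and constantly-$\bot$ evolution, and $h(\bot)=(0,\bot)$, the lifting formula gives duration $0+h_\dr(\bot)=0$ and evolution constantly $h^0_\ev(\bot)=\bot$, i.e.\ $f^\klstar(0,\bot)=(0,\bot)$.
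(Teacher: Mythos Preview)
Your proposal is correct and follows essentially the same route as the paper's proof: the same supremum formula $(\sup_n d_n,\bigcup_n e_n)$, the same ``freezing'' observation (which the paper invokes implicitly when showing the supremum is an upper bound), and the same three-way case split for monotonicity and continuity of the Kleisli lifting. You are in fact more careful than the paper in one respect --- you explicitly verify the flattening condition for the supremum, which the paper's appendix leaves unmentioned --- and your factoring of the $\BBH_0\BBM$-lifting through $h=[g,\eta\inr]$ via Proposition~\ref{prop:m-distr} is a tidy organizational device the paper does not employ (it works directly with the explicit $\BBH_0\BBM$ formulas), but these are presentational rather than substantive differences.
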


\noindent
Note that Kleisli composition is not left strict, e.g.\ $(0,\bot)^\klstar (1,\bot) = (1,\bot)\neq (0,\bot)$.
Using Theorem~\ref{thm:H0M} and a previous result~\cite[Theorem 5.8]{GoncharovRauchEtAl15}, we immediately obtain
\begin{corollary}\label{cor:H0M-iter}
$\BBH_0\BBM$ possesses a total iteration operator $(\argument)^\iistar$ obtained as
a least solution of equation $f^\iistar = [\eta,f^\iistar]^\klstar f$. This makes
$H_0MX$ into an Elgot monad.
Explicitly, $f^\iistar$ is calculated via the \emph{Kleene fixpoint
  theorem} as follows. For $f:X\to H_0M{(Y+X)}$, let
$f^{\brks{0}} = (0,\bot)$ and
$f^{\brks{i+1}} = [\eta,f^{\brks{i}}]^\klstar f$. This yields an
$\omega$-chain 
\begin{displaymath}
  f^{\brks{0}}\appr f^{\brks{1}}\appr\cdots
\end{displaymath}
and $f^\iistar = \bigjoin_i f^{\brks{i}}$. 
\end{corollary}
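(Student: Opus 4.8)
The plan is to obtain this as a direct instance of a general transfer principle: a monad whose Kleisli category is suitably order-enriched carries a canonical total Elgot iteration given by least fixpoints. The precise form I would invoke is \cite[Theorem~5.8]{GoncharovRauchEtAl15}, which takes as input a monad $\BBT$ such that (i)~every hom-set $\Hom(X,TY)$ is a pointed $\omega$-complete partial order, (ii)~Kleisli composition is $\omega$-continuous on both sides, and (iii)~Kleisli composition is right-strict (but \emph{not} necessarily left-strict), and outputs a total operator $f\mto f^{\iistar}$, namely the least solution of $f^{\iistar}=[\eta,f^{\iistar}]^\klstar f$, together with a proof that it obeys the Elgot laws, i.e.\ that $\BBT$ is an Elgot monad. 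So the proof reduces to checking that $\BBH_0\BBM$, with the order $\appr$ lifted pointwise to hom-sets, satisfies (i)--(iii); but this is exactly the content of Theorem~\ref{thm:H0M}: clause~\ref{item:H0M1} supplies the pointed $\omega$-cpo structure on each $H_0MX$, with bottom $(0,\bot)$; clause~\ref{item:H0M2} supplies two-sided monotonicity and $\omega$-continuity of Kleisli composition; and clause~\ref{item:H0M3} supplies right-strictness. Transporting these from the sets $H_0MX$ to the hom-sets $\Hom(X,H_0MY)$ is routine, since joins, bottom elements and $\omega$-continuity in $\Hom(X,H_0MY)$ are all computed componentwise, so that the least element of $\Hom(X,H_0MY)$ is the constant morphism $(0,\bot)$.

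Next I would record the explicit Kleene description. Fix $f:X\to H_0M(Y+X)$ and let $\Phi_f$ denote the map $g\mto[\eta,g]^\klstar f$ on $\Hom(X,H_0MY)$. The assignment $g\mto[\eta,g]$ is monotone and $\omega$-continuous, because the order on $\Hom(Y+X,H_0MY)$ is pointwise and $[\eta,g]$ depends on $g$ only through its right coproduct component, so $[\eta,g]\appr[\eta,g']$ iff $g\appr g'$ and $[\eta,\bigjoin_i g_i]=\bigjoin_i[\eta,g_i]$. Combined with the fact that the map $h\mto h^\klstar f$ --- Kleisli composition with $f$ on the right --- is $\omega$-continuous by clause~\ref{item:H0M2} of Theorem~\ref{thm:H0M}, this shows $\Phi_f$ is monotone and $\omega$-continuous. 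Since the constant morphism $(0,\bot)$ is the least element of $\Hom(X,H_0MY)$, the iterates $f^{\brks 0}=(0,\bot)$ and $f^{\brks{i+1}}=[\eta,f^{\brks i}]^\klstar f=\Phi_f(f^{\brks i})$ form an ascending $\omega$-chain, and the Kleene fixpoint theorem identifies its join $\bigjoin_i f^{\brks i}$ as the least fixpoint of $\Phi_f$, that is, as $f^{\iistar}$.

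I expect the only genuinely delicate point --- and the reason one cannot simply cite enrichment over pointed cpos --- to be that Kleisli composition on $\BBH_0\BBM$ is \emph{not} left-strict; indeed $(0,\bot)^\klstar(1,\bot)=(1,\bot)\neq(0,\bot)$, as noted right after Theorem~\ref{thm:H0M}. This blocks the classical $\Cppo$-enriched argument, and what rescues the situation is precisely that \cite[Theorem~5.8]{GoncharovRauchEtAl15} is formulated to require right-strictness alone: there, the Elgot identities whose proofs are sensitive to strictness --- notably the codiagonal and uniformity laws --- are derived by the usual interchange-of-joins and nested-fixpoint manipulations, which need only one-sided strictness together with two-sided $\omega$-continuity. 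So, once (i)--(iii) have been matched up as above with due attention to which strictness is and is not available, the corollary follows with no further computation.
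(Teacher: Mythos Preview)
Your proposal is correct and matches the paper's approach exactly: the paper derives the corollary in one line by combining Theorem~\ref{thm:H0M} with \cite[Theorem~5.8]{GoncharovRauchEtAl15}, which is precisely what you do. Your additional unpacking of the Kleene description and the remark on the failure of left-strictness are accurate elaborations of what the paper leaves implicit.
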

We readily obtain a \emph{progressive iteration} on $\BBH_\mplus$ by restriction
via $\iota$ (see~Fig\,\ref{fig:connection}).
\begin{corollary}\label{cor:deriv}
$\BBH_\mplus$ possesses a guarded iteration operator $(\argument)^\pstar$ by restriction from
$\BBH_0\BBM$ with guardedness being progressiveness and $f^\pstar$ being the least
solution of equation $f^\pstar = [\eta,f^\pstar]^\klstar f$.
\end{corollary}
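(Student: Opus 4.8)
The plan is to derive Corollary~\ref{cor:deriv} from Corollary~\ref{cor:H0M-iter} by showing that the total iteration operator $(\argument)^\iistar$ on $\BBH_0\BBM$ restricts along the monad morphism $\iota:\BBH_\mplus\to\BBH_0\BBM$ to a guarded iteration operator on $\BBH_\mplus$, where guardedness is progressiveness (Lemma~\ref{lem:guard}). Concretely, given $f:X\to_2 H_\mplus(Y+X)$ progressive in $X$, I would regard $f$ as a morphism $X\to H_0M(Y+X)$ via $\iota$, form $f^\iistar:X\to H_0M Y$, and prove two things: first, that $\iota(f)^\iistar$ actually lands in the subfunctor $H_\mplus Y$, so that it defines a morphism $f^\pstar:X\to H_\mplus Y$; second, that $f^\pstar$ so defined satisfies the four axioms of a guarded Elgot monad. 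The second point is essentially automatic: since $\iota$ is a monad morphism and is pointwise injective (as noted after Definition~\ref{defn:H}, via the diagram in Fig\,\ref{fig:connection}), the fixpoint law, naturality, codiagonal and uniformity for $(\argument)^\pstar$ follow from the corresponding laws for $(\argument)^\iistar$ by transporting the identities across $\iota$ and cancelling the injection; one only needs to check that progressiveness is preserved by the syntactic operations appearing in these laws, which is straightforward from the characterisation $e^0$ factors through $\inl$, together with the axioms of abstract guardedness in Fig\,\ref{fig:guard} that $\BBH_\mplus$ satisfies by Lemma~\ref{lem:guard}. The fixpoint-law characterisation of $f^\pstar$ as the \emph{least} solution is inherited directly, because the order $\appr$ on $H_0M$ restricts to $H_\mplus$ and $\iota$ is order-reflecting on the relevant hom-sets.

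The crux is therefore the first point: closure of $\iota(f)^\iistar$ under the defining constraint~\eqref{eq:constraint1} of $H_\mplus$, i.e.\ that each component $f^\pstar(x)=(d,e)$ has $e\neq\bot$ and $e^t\dar$ for all $t\in[0,d)$. I would argue this via the Kleene-chain description from Corollary~\ref{cor:H0M-iter}: $f^\iistar=\bigjoin_i f^{\brks i}$ with $f^{\brks 0}=(0,\bot)$ and $f^{\brks{i+1}}=[\eta,f^{\brks i}]^\klstar f$. The key observation is that, because $f$ is progressive in $X$, the first approximant $f^{\brks 1}=[\eta,(0,\bot)]^\klstar f$ already has a nonempty defined evolution at every point: progressiveness guarantees $e^0_f(x)=\inl u(x)$, so the $[\eta,(0,\bot)]$-part feeds a defined value at time $0$, and hence $(f^{\brks 1}(x))_\ev^0\dar$; in particular $(f^{\brks 1}(x))_\ev\neq\bot$. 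One then shows by induction that each $f^{\brks i}(x)$ lies in $H_\mplus(Y+X)$-style shape — defined on $[0,d_i)$ with $d_i>0$, or everywhere — using Theorem~\ref{thm:H0M} (monotonicity/continuity of Kleisli composition) and the already-established fact (used in the proof that $\BBH_\mplus$ is a monad) that Kleisli lifting preserves the constraint~\eqref{eq:constraint1}. Finally, since the $\omega$-chain $f^{\brks 0}\appr f^{\brks 1}\appr\cdots$ is increasing in $\appr$ and each $f^{\brks i}$ for $i\geq 1$ has $e\neq\bot$, the supremum also has $e\neq\bot$; and the constraint ``$e^t\dar$ for all $t<d$'' passes to directed $\appr$-suprema because in an $\appr$-increasing chain durations only increase while already-defined evolution values are retained, and a newly-defined point below the limiting duration must have appeared at some finite stage, where the constraint already held. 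This shows $f^\pstar(x)\in H_\mplus Y$.

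I expect the main obstacle to be precisely this limit argument for the constraint ``$e^t\dar$ for all $t<d$'' under $\appr$-suprema, since the order $\appr$ is subtle: it permits the duration to grow but, once a trajectory is closed at a finite duration with a defined endpoint, freezes that duration (the clause ``$d\in\Rz$, $e^d\dar$ imply $d=d_\star$''). So one must distinguish the case where the chain stabilises at a finite closed trajectory (then the limit equals some $f^{\brks i}$ and closure is immediate) from the case where durations strictly increase to a limit $d$ (possibly $\infty$): in the latter, for $t<d$ there is a stage $i$ with $t<d_i\le d$, whence $e_i^t\dar$ and $e^t=e_i^t\dar$. Getting this case split and the sup-of-durations bookkeeping exactly right — including the subcase $d=\infty$ arising from open/total trajectories — is the delicate part; everything else reduces to invoking Corollary~\ref{cor:H0M-iter}, Lemma~\ref{lem:guard}, and the monad-morphism property of $\iota$.
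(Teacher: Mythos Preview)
Your overall strategy---restrict $(\argument)^\iistar$ along $\iota$, verify closure under the constraint~\eqref{eq:constraint1}, and inherit the Elgot laws from $\BBH_0\BBM$ via injectivity of $\iota$---is sound and is exactly what the paper has in mind (the paper gives no proof and treats the corollary as immediate). The part about transferring the four iteration axioms and the least-fixpoint property across $\iota$ is fine.

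The gap is in your closure argument. The inductive claim that each approximant $f^{\brks{i}}(x)$ already lies in $H_\mplus Y$ (``defined on $[0,d_i)$ with $d_i>0$, or everywhere'') is \emph{false}, and so the appeal to ``Kleisli lifting preserves constraint~\eqref{eq:constraint1}'' does not apply: that fact requires the morphism $[\eta,f^{\brks{i}}]$ to take values in $H_\mplus$, which it need not. Concretely, take $X=\{\ast\}$, $Y=\{y\}$, and $f(\ast)=(2,e)$ with $e^0=\inl y$ and $e^t=\inr\ast$ for $t>0$ (this is a closed trajectory in $H_\mplus(Y+X)$ and is progressive). Then $f^{\brks{1}}(\ast)=[\eta,(0,\bot)]^\kklstar f(\ast)$ has duration $2$ but evolution defined only at $\{0\}$, so it is \emph{not} in $H_\mplus Y$. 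Consequently your limit step ``for $t<d$ there is a stage $i$ with $t<d_i$, whence $e_i^t\dar$'' fails as stated: at stage $1$ we have $d_1=2$ yet $e_1^1\uar$. Your ``chain stabilises'' sub-case is broken for the same reason.

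The fix is to prove a weaker invariant that still suffices for the limit. Using progressiveness, one first checks $(f^{\brks{i}})_\ev^0=u$ is total for all $i\geq 1$, so $[\eta,f^{\brks{i}}]_\ev^0$ is total on $Y+X$. From this one shows by induction that
\[
  [0,\,d_{i-1}(x))\ \subseteq\ \dom\bigl((f^{\brks{i}}(x))_\ev\bigr)\qquad (i\geq 2),
\]
i.e.\ the defined region at stage $i$ covers the \emph{previous} duration, not the current one. This is enough: for $t<d=\sup_i d_i$ pick $i$ with $t<d_i$ and conclude $e^t=(f^{\brks{i+1}}(x))_\ev^t\dar$. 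Together with $e^0\dar$ and the flattening condition (which forces $\dom e$ to be $\Rz$ when $e^d\dar$), this yields $f^\iistar(x)\in H_\mplus Y$. Equivalently, one can bypass the chain entirely and argue directly from the fixpoint equation $f^\iistar=[\eta,f^\iistar]^\kklstar f$ together with totality of $[\eta,f^\iistar]_\ev^0$.
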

We proceed to obtain an iteration operator for~$\BBH$.
Remarkably, we
cannot use the technique of restricting the iteration operator
from $\BBH_0\BBM$ to $\BBH$, we applied in the case of $\BBH_{\mplus}$, even though $H$ embeds into $H_0M$ -- the
following example illustrates the issue.
\begin{example}\label{exp:iter-H}
  Let $f=(\const{1},e): \Rz\to H_0M(\Rz+\Rz)$ with $e(x) = \const{\inr 0}$ if $x =
  0$ and $e(x) = \const{\inl 1}$ otherwise. Even though $f$ factors through the
  inclusion $\upsilon:H\to H_0M$, the result of calculating $f^\iistar(0)$ is a
  trajectory $(1,e_\star)$ with $\dom e = (0,\infty)$, which is \emph{not} down-closed
  and therefore $(1,e_\star)$ is not in $\BBH$.
\end{example}
Example~\ref{exp:iter-H} indicates that the restriction of the canonical complete 
partial order from $H_0MX$ to $HX$ is not complete and therefore we cannot use 
it to show that $\BBH$ is Elgot. We can nevertheless obtain the following
\begin{theorem}\label{thm:retract} 
Let $\rho:\BBH_0\BBM\to\BBH$ and $\upsilon:\BBH\to\BBH_0\BBM$ be the pair of natural
transformations from Definition~\ref{defn:H}. Then for every $f:X\to H_0M(Y+X)$,
$\rho f^\iistar = \rho (\upsilon\rho f)^\iistar$.
\end{theorem}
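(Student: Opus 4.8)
The plan is to prove the identity $\rho f^{\iistar} = \rho(\upsilon\rho f)^{\iistar}$ by exhibiting $\rho$ as an \emph{iteration-congruent retraction} in the sense of~\cite{GoncharovSchroderEtAl17} and invoking the corresponding general transfer result. Concretely, $\rho:\BBH_0\BBM\to\BBH$ is a natural transformation with the right inverse $\upsilon$ (we already observed $\rho\upsilon = \id$), so the composite $e_X = \upsilon_X\rho_X:H_0MX\to H_0MX$ is an idempotent on the carrier $H_0MX$, and $HX$ is (up to iso) its image. To apply the machinery it suffices to check that this idempotent is compatible with the iteration operator $(\argument)^{\iistar}$ on $\BBH_0\BBM$ from Corollary~\ref{cor:H0M-iter}, in the precise sense demanded by the notion of iteration-congruent retraction; the conclusion $\rho f^{\iistar} = \rho(\upsilon\rho f)^{\iistar}$ is then exactly the statement that $\rho$ transports the iteration operator.

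First I would recall the explicit Kleene-style description of $f^{\iistar}$: $f^{\brks 0} = (0,\bot)$, $f^{\brks{i+1}} = [\eta, f^{\brks i}]^{\klstar} f$, and $f^{\iistar} = \bigjoin_i f^{\brks i}$ (Corollary~\ref{cor:H0M-iter}). The strategy is to compare the two $\omega$-chains arising from $f$ and from $g := \upsilon\rho f$ after applying $\rho$, i.e.\ to show $\rho f^{\brks i}$ and $\rho g^{\brks i}$ agree, or at least have the same join after $\rho$, by induction on $i$. The base case is immediate since $\rho(0,\bot) = (0,\bot)$ in $HX$. For the induction step one must relate $\rho([\eta, f^{\brks i}]^{\klstar} f)$ with $\rho([\eta, g^{\brks i}]^{\klstar} g)$; here the key algebraic facts are that $\rho$ is a monad morphism (stated in the summary around Fig.~\ref{fig:connection}), that $\rho\upsilon = \id$, and that Kleisli composition in $\BBH_0\BBM$ is monotone and continuous (Theorem~\ref{thm:H0M}), so that $\rho$ commutes with the relevant $\omega$-joins. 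The delicate point is that $\upsilon$ is \emph{not} a monad morphism (Remark~\ref{rem:n-strict}), so one cannot simply push $\rho$ inside a Kleisli lifting built from $\upsilon\rho f$; instead one works with $\rho\circ(\upsilon\rho f)^{\klstar}\circ\upsilon = (\rho f)^{\klstar}$ (the definition of Kleisli lifting in $\BBH$) and manipulates the chain at the level of $\BBH$, where everything is well-behaved.

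The main obstacle I anticipate is exactly the failure of $\upsilon$ to be a monad morphism, which blocks the naive "$\rho$ is monadic, done" argument: the chain $g^{\brks i}$ lives in $H_0MX$ but its intermediate iterates need not lie in the image of $\upsilon$, so one cannot freely interchange $\rho$ with the Kleisli liftings that define $g^{\brks i}$. The way around this is to verify the precise technical hypothesis of iteration-congruent retraction from~\cite{GoncharovSchroderEtAl17} — roughly, that $e = \upsilon\rho$ satisfies $e\,(h^{\klstar}) = e\,((eh)^{\klstar})$ and similar identities needed to make the induction go through — using the concrete formulas for $\rho$ (via the cut-off point $d_\star = \sup\{t<d\mid [0,t)\subseteq\dom e\}$) and for Kleisli composition in $H_0M$ and in $H$. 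Once that compatibility is in place, the equality of the two $\rho$-images of the chains, hence of their joins, follows, and $\rho f^{\iistar} = \rho(\upsilon\rho f)^{\iistar}$ is obtained; the remaining work is the routine but somewhat fiddly case analysis (on whether $e$ is total, whether $D = \Rz$, and where divergence first occurs) showing that $\rho$ does not distinguish the iterates of $f$ from those of $\upsilon\rho f$.
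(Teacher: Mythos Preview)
Your plan has a structural confusion and a genuine gap.

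First, the framing is circular. In the paper's usage, the identity $\rho f^{\iistar} = \rho(\upsilon\rho f)^{\iistar}$ \emph{is} what it means for $(\rho,\upsilon)$ to be an iteration-congruent retraction; the transfer theorem from~\cite{GoncharovSchroderEtAl17} is then invoked \emph{after} this theorem to conclude that $\BBH$ is Elgot. So you cannot appeal to that machinery to obtain the identity --- it is the hypothesis, not the conclusion. The condition you sketch, $e(h^{\klstar}) = e((eh)^{\klstar})$ for $e=\upsilon\rho$, is essentially the monad-morphism property of~$\rho$ (already established), and it does make the induction step $\rho f^{\brks{i}} = \rho g^{\brks{i}}$ go through for $g=\upsilon\rho f$. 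But that is only half the job.

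The real gap is the passage to the limit. You write that continuity of Kleisli composition ``so that $\rho$ commutes with the relevant $\omega$-joins'' lets you conclude $\rho\bigjoin_i f^{\brks i}=\rho\bigjoin_i g^{\brks i}$; this is a non sequitur. The map $\rho$ is \emph{not} $\omega$-continuous for $\appr$ --- it is not even monotone. For instance, take $(1,e_1)\appr(1,e_2)$ with $e_1$ defined only at a single interior point and $e_2$ total: then $\rho(1,e_1)=(\infty,\bot)$ while $\rho(1,e_2)=(1,e_2)$, and $\infty\not\leq 1$. More to the point, two $\omega$-chains can agree termwise under $\rho$ yet have joins with different $\rho$-images: one chain may carry values on an interval disconnected from the growing initial segment, and in the limit that interval merges with the segment, so $\rho$ of the join sees it, whereas $\rho$ of each finite stage does not. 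Exactly this phenomenon is what Example~\ref{exp:iter-H} warns about. The paper therefore does \emph{not} argue via $\rho f^{\brks i}=\rho g^{\brks i}$; instead it proves a pointwise comparison lemma (Lemma~\ref{lem:itercong}): whenever $f^{\brks n}(x)$ is defined at time $t$ but $(\upsilon\rho f)^{\brks n}(x)$ is not, there is an earlier $t'\le t$ at which $f^{\brks m}(x)$ stays undefined for all $m\ge n$. This is precisely the extra information needed to show that after applying $\rho$ (which cuts at the first gap) the two limits coincide. Your ``fiddly case analysis'' at the end would have to rediscover this lemma; without it the argument does not close.
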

In the terminology of~\cite{GoncharovSchroderEtAl17}, Theorem~\ref{thm:retract}
states that the pair $(\rho,\upsilon)$ is an \emph{iteration-congruent retraction}.
Therefore, per~\cite[Theorem 21]{GoncharovSchroderEtAl17}, $\BBH$ inherits a total Elgot
iteration from $\BBH_0\BBM$.
\begin{corollary}
  $\BBH$ is an Elgot monad with the iteration operator
  $(\argument)^\istar$ defined as follows: for every $f:X\to H(Y+X)$,
  $f^\istar = \rho (\upsilon f)^\iistar$ assuming that
  $(\argument)^\iistar$ is the iteration operator on $\BBH_0 \BBM$.
\end{corollary}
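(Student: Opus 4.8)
The plan is to recognise the situation as an instance of transfer of Elgot iteration along an \emph{iteration-congruent retraction} and to invoke \cite[Theorem~21]{GoncharovSchroderEtAl17}, which states that such a retraction of an Elgot monad carries a canonical Elgot structure. The ingredients this requires are exactly the ones already at hand: (i) $\BBH_0\BBM$ is an Elgot monad whose (total) operator $(\argument)^\iistar$ is the least solution of the fixpoint law (Corollary~\ref{cor:H0M-iter}); (ii) $\rho\colon\BBH_0\BBM\to\BBH$ is a monad morphism and $\upsilon$ is a section of it as a natural transformation, $\rho\upsilon=\id_\BBH$ (Definition~\ref{defn:H} and the remarks following it); and (iii) the pair $(\rho,\upsilon)$ is iteration-congruent, i.e.\ $\rho f^\iistar=\rho(\upsilon\rho f)^\iistar$ for every $f\colon X\to H_0M(Y+X)$, which is precisely Theorem~\ref{thm:retract}. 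Granting (i)--(iii), the cited theorem equips $\BBH$ with an Elgot iteration operator, and the operator it produces is exactly the one in the statement, $f^\istar=\rho(\upsilon f)^\iistar$; moreover, since $(\argument)^\iistar$ is total on $\BBH_0\BBM$ and $f\mapsto\upsilon f$ always lands in the domain of that operator, the induced operator on $\BBH$ is total, so $\BBH$ is an Elgot monad and not merely a guarded one.

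For orientation, here is what the transfer amounts to in our case (this is the content of the cited theorem specialised to $(\BBH_0\BBM,\rho,\upsilon)$, and one could also carry it out directly). The fixpoint law of Definition~\ref{def:g-elgot} uses only (i) and (ii): applying the monad morphism $\rho$ to the $\BBH_0\BBM$-fixpoint identity $(\upsilon f)^\iistar=[\eta,(\upsilon f)^\iistar]^\klstar\,\upsilon f$ and using $\rho\eta=\eta$ together with $\rho\upsilon f=f$ turns it into $f^\istar=[\eta,f^\istar]^\klstar f$ in $\BBH$. For naturality, codiagonal and uniformity one argues in two moves. First, for the $\BBH$-term $q$ appearing on one side of the axiom there is a $\BBH_0\BBM$-term $p$, built by the same Kleisli rearrangement out of $\upsilon f,\upsilon g,\dots$, with $\rho p=q$; this is a routine calculation pushing $\rho$ through coproduct injections and Kleisli liftings, again using naturality of $\rho$ and $\rho\upsilon=\id$. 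Second, one combines the corresponding axiom for $(\argument)^\iistar$ on $\BBH_0\BBM$ with Theorem~\ref{thm:retract} in the form $\rho p^\iistar=\rho(\upsilon\rho p)^\iistar=\rho(\upsilon q)^\iistar=q^\istar$. Thus each law for $(\argument)^\istar$ reduces to the already-established law for $(\argument)^\iistar$; this is also why iteration-congruence is the right hypothesis: the fixpoint law alone would follow without it, but the rearrangement-type laws genuinely need step (iii).

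The only substantial point is hypothesis (iii), i.e.\ Theorem~\ref{thm:retract}, which we are entitled to assume here; it is also the delicate one, since $\rho$ truncates a trajectory at $d_\star$ — the supremum of the times up to which its domain is an initial segment — and its proof must check that truncating before versus after forming the $\omega$-suprema that compute $(\argument)^\iistar$ yields the same value after one final application of $\rho$, a phenomenon already visible in Example~\ref{exp:iter-H}. Modulo Theorem~\ref{thm:retract}, the Corollary is an immediate instantiation of \cite[Theorem~21]{GoncharovSchroderEtAl17}, so the remaining work is only to match its hypotheses and its explicit formula to the data $(\BBH_0\BBM,\rho,\upsilon)$.
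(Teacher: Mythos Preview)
Your proposal is correct and follows exactly the paper's own route: the paper states the Corollary immediately after observing that Theorem~\ref{thm:retract} makes $(\rho,\upsilon)$ an iteration-congruent retraction and then invokes \cite[Theorem~21]{GoncharovSchroderEtAl17}, which is precisely your argument. Your additional paragraph unpacking how the Elgot laws transfer is more detail than the paper provides, but the underlying approach is identical.
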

\begin{corollary}\label{cor:restr}
The progressive iteration operator $(\argument)^\pstar$ of $\BBH_{\mplus}$ is
the restriction of the total iteration operator $(\argument)^\istar$ of
$\BBH$ along $\rho\iota:H_\mplus\to H$ (as in Fig\,\ref{fig:connection}), i.e.\
for every $f:X\to_2 H_{\mplus}(Y+X)$, $\rho\comp \iota f^\pstar = (\rho\comp \iota f)^\istar$.
\end{corollary}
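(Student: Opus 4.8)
The plan is to chase the constructions of the two iteration operators through Fig\,\ref{fig:connection} and then close the remaining gap with Theorem~\ref{thm:retract}; note that a direct least-fixpoint comparison is unavailable, since $(\argument)^{\istar}$ on $\BBH$ is \emph{not} characterized as a least solution of its fixpoint equation (this mirrors the situation for $\PSet_{\mplus}\hookrightarrow\PSet$, where the Elgot iteration on $\PSet$ is not a least fixpoint either). Fix a progressive morphism $f:X\to_2 H_{\mplus}(Y+X)$, so that $f^{\pstar}$ is defined.

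For the left-hand side, recall from Corollary~\ref{cor:deriv} that $(\argument)^{\pstar}$ is obtained as the restriction of $(\argument)^{\iistar}$ along the pointwise injective monad morphism $\iota:\BBH_{\mplus}\to\BBH_0\BBM$: for progressive $f$, the $\BBH_0\BBM$-iterate $(\iota f)^{\iistar}$ factors through $\iota_Y:H_{\mplus}Y\to H_0MY$, and $f^{\pstar}$ is precisely that factor, so that $\iota f^{\pstar} = (\iota f)^{\iistar}$. Postcomposing with $\rho$ gives $\rho\comp\iota f^{\pstar} = \rho\bigl((\iota f)^{\iistar}\bigr)$. For the right-hand side, the definition of $(\argument)^{\istar}$ on $\BBH$ recorded just after Theorem~\ref{thm:retract} gives, for any $h:X\to H(Y+X)$, $h^{\istar} = \rho\,(\upsilon h)^{\iistar}$; instantiating $h:=\rho\comp\iota f$ yields $(\rho\comp\iota f)^{\istar} = \rho\bigl((\upsilon\rho\iota f)^{\iistar}\bigr)$.

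It therefore remains to identify the two right-hand sides, i.e.\ to prove $\rho\bigl((\iota f)^{\iistar}\bigr) = \rho\bigl((\upsilon\rho\iota f)^{\iistar}\bigr)$. This is exactly Theorem~\ref{thm:retract} applied to $g:=\iota f:X\to H_0M(Y+X)$, which states that $\rho g^{\iistar} = \rho(\upsilon\rho g)^{\iistar}$ since $(\rho,\upsilon)$ is an \emph{iteration-congruent retraction}. Chaining the three equalities yields $\rho\comp\iota f^{\pstar} = (\rho\comp\iota f)^{\istar}$, as required. The only delicate point is the first step — making precise the sense in which Corollary~\ref{cor:deriv} obtains $(\argument)^{\pstar}$ "by restriction", namely that the $\BBH_0\BBM$-iterate of a progressive map genuinely lands in $H_{\mplus}Y$ so that $\iota f^{\pstar}=(\iota f)^{\iistar}$ holds on the nose — but that fact is already part of Corollary~\ref{cor:deriv}; the remainder is a purely formal diagram chase using only the properties of $\rho$, $\iota$, $\upsilon$ collected in Section~\ref{sec:fist}.
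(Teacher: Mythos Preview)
Your argument is correct and is essentially the paper's own proof: the same three-step chain $(\rho\iota f)^\istar = \rho(\upsilon\rho\iota f)^\iistar = \rho(\iota f)^\iistar = \rho\iota f^\pstar$, using the definition of $(\argument)^\istar$, Theorem~\ref{thm:retract} applied to $g=\iota f$, and the definition of $(\argument)^\pstar$ from Corollary~\ref{cor:deriv}. One small correction to your motivational aside: the claim that ``the Elgot iteration on $\PSet$ is not a least fixpoint either'' is wrong --- Example~\ref{expl:monad}(1) defines it precisely as the least solution of the fixpoint law --- but this has no bearing on the proof itself.
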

\begin{proof}
Let $(\argument)^\iistar$ be the iteration operator of $\BBH_0\BBM$. Then, by definition,
\begin{flalign*}
&& (\rho\comp \iota f)^\istar = \rho(\upsilon\rho\comp \iota f)^\iistar = \rho(\iota f)^\iistar= \rho\comp \iota f^\pstar.&&\qed
\end{flalign*}
\noqed
\end{proof}
Using the fact that the iteration operator for $\BBH$ satisfies the codiagonal
law (see Definition~\ref{def:g-elgot}), we factor the former through progressive
iteration as follows.
\begin{theorem}[Decomposition Theorem]\label{thm:decomp} Given $f:X\to H(Y+X)$, let
$\hat f:X\to_{12} H((Y+X)+X)$ be defined as follows:
\begin{flalign*}
\quad  \hat f_\dr(x) = f_\dr(x) \qquad \hat f_\ev^0(x) = (\inl+\id)(f_\ev^0(x)) \quad \hat f_\ev^t(x) = \inl f_\ev^t(x) && (x\in X, t>0)
\end{flalign*}
Then $f^\istar = (\hat f^\istar)^\pstar$.
\end{theorem}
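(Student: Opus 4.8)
The plan is to derive the identity from the \emph{codiagonal law} for the Elgot iteration of $\BBH$ (Definition~\ref{def:g-elgot}), as announced just before the statement: the auxiliary morphism $\hat f$ is engineered so that merging its two copies of $X$ back into one reproduces $f$, while the nested iteration $(\hat f^{\istar})^{\istar}$ splits the feedback of $f$ into its ``discrete'' time-zero part (absorbed by the inner $(\argument)^{\istar}$) and its ``progressive'' time-advancing part (absorbed by the outer one).

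First I would check that $H[\id,\inr]\comp\hat f = f$. This is a pointwise computation: $H[\id,\inr]$ leaves durations untouched, and they agree by definition of $\hat f$; for $t>0$ one has $[\id,\inr](\inl f_\ev^t(x)) = f_\ev^t(x)$; and at $t=0$ the composite $[\id,\inr]\comp(\inl+\id)$ is the identity on $Y+X$, so $[\id,\inr]((\inl+\id)(f_\ev^0(x))) = f_\ev^0(x)$. Since $\BBH$ is a (total) Elgot monad, guardedness is total and $\hat f:X\to_{12,2}H((Y+X)+X)$ holds vacuously; hence the codiagonal law applies to $\hat f$ and yields $f^{\istar} = (H[\id,\inr]\comp\hat f)^{\istar} = \hat f^{\istar\istar} = (\hat f^{\istar})^{\istar}$.

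It then remains to see that this iteration is computed \emph{progressively}, i.e.\ $(\hat f^{\istar})^{\istar} = (\hat f^{\istar})^{\pstar}$. For that I would first show that $\hat f^{\istar}:X\to H(Y+X)$ is progressive in $X$, i.e.\ $(\hat f^{\istar})_\ev^0$ factors through $\inl:Y\ito Y+X$. The relevant structural fact -- recorded by $\hat f:X\to_{12}H((Y+X)+X)$ in the statement -- is that the time-zero evolution of $\hat f$ never enters the inner copy of $X$, only the $Y$-part or the trailing feedback copy of $X$. Unfolding the fixpoint law $\hat f^{\istar} = [\eta,\hat f^{\istar}]^{\klstar}\comp\hat f$ (equivalently, inspecting the Kleene approximants $(\upsilon\hat f)^{\brks{i}}$ from Corollary~\ref{cor:H0M-iter}), one sees that all feedback occurring while computing $\hat f^{\istar}$ is confined to $t=0$ and merely chases the chain $x=x_0\mapsto x_1\mapsto\cdots$ determined by $f_\ev^0(x_i)=\inr x_{i+1}$: as soon as this chain leaves the $X$-part, $(\hat f^{\istar})_\ev^0$ takes a value in $Y$, and otherwise it stays undefined, so in all cases $(\hat f^{\istar})_\ev^0(x)\in Y$ whenever it is defined. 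By Lemma~\ref{lem:guard} this is exactly progressiveness of $\hat f^{\istar}$, so -- modulo the embedding $\rho\iota$ -- $\hat f^{\istar}$ is an admissible argument for $(\argument)^{\pstar}$, and Corollary~\ref{cor:restr} gives $(\hat f^{\istar})^{\istar} = (\hat f^{\istar})^{\pstar}$. Combining the two parts, $f^{\istar} = (\hat f^{\istar})^{\pstar}$.

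The main obstacle is the progressiveness step: one must work through the explicit least-fixpoint description of $(\argument)^{\iistar}$ on $\BBH_0\BBM$ together with the concrete Kleisli lifting of $\BBH$, taking care of the degenerate situations in which the time-zero chain of $f_\ev^0$ never terminates and $\hat f^{\istar}(x)$ collapses to the empty (non-progressive divergent) trajectory -- and check that the identification of $(\argument)^{\istar}$ with $(\argument)^{\pstar}$ on progressive morphisms survives those cases. The remaining ingredients -- the equality $H[\id,\inr]\comp\hat f = f$, the applicability of the codiagonal law, and Corollary~\ref{cor:restr} -- are routine.
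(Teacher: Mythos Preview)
Your proposal is correct and follows exactly the paper's route: verify $f = H[\id,\inr]\comp\hat f$, apply the codiagonal law to get $f^{\istar} = \hat f^{\istar\istar}$, and then invoke Corollary~\ref{cor:restr} after observing that $\hat f^{\istar}$ is progressive in its second argument. If anything, you are more careful than the paper's three-line proof, which simply asserts progressiveness without discussing the degenerate case where the time-zero chain of $f_\ev^0$ diverges and $\hat f^{\istar}(x)$ becomes the empty trajectory; your flagging of that boundary case (and of the implicit identification along $\rho\iota$ needed to make sense of $(\argument)^{\pstar}$ here) is a genuine refinement rather than a deviation.
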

\begin{proof}
Note that $f = H[\id,\inr]\hat f$. Hence, by the codiagonal law: $f^\istar = (H[\id,\inr]\hat f)^\istar = \hat f^{\istar\istar}$,
and the latter is $(f^\istar)^\pstar$ per Corollary~\ref{cor:restr}, as $f^\istar$
happens to be progressive in the second argument.
\end{proof}
Theorem~\ref{thm:decomp} presents the iteration of $\BBH$ as a nested
combination of progressive iteration and what can be called
\emph{singular iteration}, as it is precisely the restriction of $(\argument)^\istar$
responsible for iterating computations of zero duration.
\begin{figure}[t!]
\centering
\begin{tikzpicture}
\begin{axis}[xmin=0, xmax=1, ymin=0, ymax=1,width=.5\textwidth]

\foreach \x [evaluate=\x as \y using {1-2^(-\x)}] in {1, ..., 8} {
  \edef\temp{\noexpand\addplot[violet, thick, smooth, samples at={\y,1}] {\y};}\temp
}

\addplot[blue, thick, smooth,
  samples at={0,1}] {x};
\end{axis}
\end{tikzpicture}
\hspace{1.5cm}
\begin{tikzpicture}
\begin{axis}[xmin=0, xmax=1, ymin=-1, ymax=1,width=.5\textwidth]

\foreach \x [evaluate=\x as \y using {2 * \x / (1 + 2 * \x)}] in {0, ..., 30} {
  \edef\temp{\noexpand\addplot[violet, thick, smooth, samples at={\y,1}] {\y};}\temp
}

\addplot[blue, thick, smooth,
  samples at={0,.005,...,
              .6,.601,.602,...,
              .713,.714,.715,...,
              .776,.777,.778,...,
              .818,.8181,.8185,...,
              .845,.846,.847,...,
              .865,.866,.8661,...,
              .995
}] {x * cos(pi * x * deg(1/(1-x))};
\end{axis}
\end{tikzpicture}
\caption{Examples of (un-)definiteness of basic iteration.}
\label{fig:sin-example}
\end{figure}
Finally, we recover basic iteration, discussed in
Section~\ref{sec:language}, on $\BBH_\mplus$ (and hence on $\BBH$) by
turning a morphism $X\to H_\mplus X$ into a progressive one
$X\to H_\mplus(X+X)$.
\begin{definition}[Basic Iteration]
  We define \emph{basic iteration}
  $(d,e)^\hash:X\to H_\mplus X$ to be
\begin{displaymath}
  \big ((d,\lambda x. \, \lambda t.\,\ite{\inl e^0(x)}{t=0}{\inr e^t(x)} ):X
  \to_2  H_\mplus (X+X) \big )^\pstar:X\to H_\mplus X.
\end{displaymath}
\end{definition}
\begin{example}\label{expl:h-inter}
  We illustrate our design decisions behind $\BBH_\mplus$ (and $\BBH$)
  with the following two examples of Zeno behaviour, computing
  $f^\hash = (d^\hash,e^\hash)$ for specific morphisms
  $f:X\to H_\mplus X$.
\begin{enumerate}
\item Let $f=(d,e):[0,1]\to H_\mplus[0,1]$ be defined as follows: for
  every $x\in [0,1]$, $d(x) = (1 - x)/2$ and $e^t(x) = x + t$ for
  $t\in [0,(1 - x)/2]$. It is easy to see that
  $d^\hash(0) = 1/2 + 1/4 + \ldots = 1$, however, by definition,
  $(e^\hash(0))^1\uar$. This is indeed a prototypical example of Zeno
  behaviour (specifically, this is precisely Zeno's \emph{``Dichotomy''
    paradox} analyzed by Aristotle~\cite[Physics, 
231a–241b]{Aristotle08}): Given a distance of total length~$1$ to be covered,
suppose some portion $x<1$ of it has been covered already. Then the remaining
distance has the length $1-x$. As originally argued by Zeno, in order to cover
this distance, one has to pass the middle, i.e.\ walk the initial interval
of length $(1-x)/2$ and our function~$f$ precisely captures the dynamics
of this motion. The resulting evolution $e^\hash(0)$ together with the
corresponding approximations are depicted on the left of
Fig\,\ref{fig:sin-example}. In this formalization, the traveller can not reach
the end of the track, but only because we designed $(\argument)^\hash$ to be
so. We could also justifiably define $(e^\hash(0))^1$ to be $1$, for this is what
$(e^\hash(0))^t$ tends to as $t$ tends to $1$. This is indeed the case of the
approach from~{\cite{NevesBarbosaEtAl16,neves18}} developed for the original monad $\BBH_0$.
 \item It is easy to obtain an example of an open trajectory produced by Zeno iteration
that cannot be continuously extended to a closed one by adapting a standard example of
\emph{essentially discontinuous} function
from analysis: let e.g.\ $u^t: [0,1)\to [0,1)$ be as follows:
\begin{flalign*}
  \qquad u^t(x) =&\; (t+x)\cos\left(\frac{\pi t}{(1-x) (1-x-t)}\right) & (t \in [0,1-x))\\
  \qquad u^t(x) =&\; 1 & (t\in [1-x,1))
\end{flalign*}
The graph of $u(0)$ is depicted on the right of Fig\,\ref{fig:sin-example} where
one can clearly see the discontinuity at $t=1$. It is easy to verify that $(1,u)\in H[0,1]$
is obtained by applying basic iteration to
$f=(d,e):[0,1)\to H_\mplus[0,1)$ given as follows:
\begin{flalign*}
  \quad d(x)   =\;& \frac{2(1-x)^2}{3-2x}\qquad&
  e^t(x) =\;& (t+x)\cos\left(\frac{\pi t}{(1-x) (1-x-t)}\right)\qquad & (t \in [0,d(x)))\\[1ex]
  &&e^t(x) =\;& d(x) + x & (t\in [d(x),1))
\end{flalign*}
\end{enumerate}
\end{example}
Even though we carried our developments in the category of sets, we
designed $\BBH_\mplus$ and $\BBH$ keeping in touch with a topological
intuition. The following instructive example shows that the iteration
operators developed in the previous section cannot be readily
transferred to the category of topological spaces and continuous maps,
for reasons of \emph{instability}: small changes in the definition of
a given system may cause drastic changes in its behaviour. In
particular, even if a morphism $(d,e) : X \to H_0 X$ is continuous
(for the topology described in \cite{NevesBarbosaEtAl16}) the duration
component $d^\hash : X \to [0,\infty]$ of
$(d^\hash,e^\hash) = (d,e)^\hash$ need not be continuous.
\begin{example}[Hilbert Cube]\label{exp:hilbert}
Let $X=[0,1]^\omega$ be the \emph{Hilbert cube}, i.e.\ the topological
product of $\omega$ copies of $[0,1]$ and let $\oname{hd}:X\to [0,1]$ and
$\oname{tl}:X\to X$ be the obvious projections realizing the isomorphism
$[0,1]^\omega\cong [0,1]\times [0,1]^\omega$. Let $f=(\oname{hd},e):X\to H_\mplus X$
with $e:X\to X^{\Rz}$ be defined as follows:
\begin{itemize}
  \item $e^t(x) = x$ if $\oname{hd}(x) = 0$, $t\in\Rz$;
  \item $e^t(x) = \bigl((\oname{hd}(x) - t)\cdot x + t \cdot \oname{tl}(x)\bigr)/\oname{hd}(x)$ if $0<\oname{hd}(x)$ and $t < \oname{hd}(x)$;
  \item $e^t(x) = \oname{tl}(x)$ if $\oname{hd}(x) > 0$ and $t\geq\oname{hd}(x)$.
\end{itemize}
In the second clause we use a \emph{convex combination} of $x$ and
$\oname{tl}(x)$ as vectors of $X$ seen as a vector space (indeed, even
a Hilbert space) over the reals.  It can now be checked that the
cumulative duration $d^\hash$ in $(d^\hash,e^\hash) = (d,e)^\hash$ is
not continuous. To see why, note that $d^\hash(x)$ is the (possibly
infinite) sum of the components of $x$ from left to right up to the
first zero element, and therefore each $U=(d^\hash)^\mone([0,a))$
contains all such vectors $x\in [0,1]^\omega$ for which this sum is
properly smaller than $a$. Then recall that a basic open set of $[0,1]^\omega$ must be
a \emph{finite} intersection of sets of the form $\pi_i^{\mone}(V)$,
$V\subseteq[0,1]$ open, $i \in \mathbb{N}$. Therefore, if $U$ was open the
definition of the product topology on $[0,1]$ would imply that for
every vector $x$ in $U$ there exists a position such that by altering
the components of $x$ arbitrarily after this position, the result
would still belong to $U$. This is obviously not true for $U$, because
by replacing the elements of any infinite vector from $[0,1]^\omega$
after any position with $1$, would give a vector summing to
infinity.
\end{example}

\section{Bringing While-loops Into The Scene}\label{sec:sem}

In Section~\ref{sec:language}, we started building a simple hybrid
programming language. We sketched a monad-based semantics for the
expected programs constructs, except the while-loops. Here we extend
it by taking $\BBH$, which is a supermonad of $\BBH_0$, as the
underlying monad and interpret while-loops~\eqref{eq:while} via the
iteration operator of $\BBH$.

Recall that $\prog[b]$ is an element of the free Boolean
algebra generated by the expressions $t = t$ and $t < t$, and that
there exists a predicate map $\prog[b] : \Reals^n \to 2$. Now for each
$\prog[b] : \Reals^n \to 2$ and $f : \Reals^n \to H(\Reals^n)$
denote the function,
\[
%
  \Big ( \Reals^n \xto{~\dist \comp\pv{\id,\prog[b]}~}
         \Reals^n + \Reals^n \xto{~[\eta\inl,\, (H \inr) \comp f ]~}
         H(\Reals^n + \Reals^n) \xto{~m~} H(\Reals^n + \Reals^n)  \Big )^\istar
\]
by $w(\prog[b],f)$ where $\dist : X \times 2 \to X + X$ is the obvious distributivity
transformation, and
$m(d,e) = (d, e')$ with
$e'(t) = \ite{\inl(x)}{(\inr (x) = e(t) \text{ and } t <
  d)}{e(t)}$. Intuitively, the function $m$ makes the last point of the
trajectory be the only one that is evaluated by the test condition
of the while-loop. Then, we define
$\lsem \prog[ while ] \> \prog[b] \> \{ \prog[p] \} \rsem =
w(\prog[b],\lsem \prog[p] \rsem)$ and this gives a hybrid programming
language,
\begin{ceqn}
\[
  \prog[p = a\in At(X) \mid skip\mid p\sComp p \mid p +_{b} p \mid
  while \> b \> \{ p \}]
\]
\end{ceqn}
\noindent with while-loops.
\begin{example}
  Let us consider some programs written in this language.
\begin{enumerate}
\item We start again with a classic program, in this case
  $\prog[ while ] \> \prog[ true ] \> \prog[ \{ x:= x + 1 \} ]$.  It
  yields the empty trajectory $\bot$.
\item Another example of a classic program is,
\begin{ceqn}
\begin{flalign*}
 \prog[ while  \> x \leq 10  \>  \{ x:= x + 1 \sComp wait(1) \} ]
\end{flalign*}
\end{ceqn}
If for example the initial value is $0$ the program takes eleven time
units to terminate.
\item Let us consider now the program
  $\prog[ while \> x \geq 1 \> \{ \> ( \dot{x} = - 1 \> \& \> 1 ) \>
  \} ]$.  If the initial value is $0$ the program outputs the
  trajectory with duration $0$ and constant on $0$, since it never
  enters in the loop. If we start e.g. with $3$ as initial value then
  the program inside the while-loop will be executed precisely
  three times, continuously decreasing $\mathsf{x}$ over time.
\item In contrast to classic programming languages, here infinite
  while-loops need not be undefined. The cruise controller discussed
  in the introduction,
\begin{ceqn}
  \begin{align*}
  \mathsf{while}  \> \mathsf{true} \> \{
  (\dot{\mathtt{v}} = 1 \> \& \> 1)  +_{\prog[v] \leq 120}
  (\dot{\mathtt{v}} = -1 \> \& \> 1) \}
\end{align*}
\end{ceqn}
\noindent
is a prime example of this.
\item Finally, the bouncing ball,
\begin{ceqn}
  $(\mathsf{p := 1, v := 0}) ;  ( \prog[ while \> true \> \{ b \}] )$
which has Zeno behaviour, outputs a trajectory describing the ball's
movement over the time interval $[0,d)$ where $d$ is the instant of
time at which the ball stops.
\end{ceqn}

\end{enumerate}
\end{example}

\section{Conclusions and Further Work}\label{sec:concl}
We developed a semantics for hybrid iteration by bringing together two
abstraction devices introduced recently: guarded Elgot
iteration~\cite{GoncharovSchroderEtAl17} and the hybrid
monad~{\cite{NevesBarbosaEtAl16,neves18}}. Our analysis reveals that,
on the one hand, the abstract notion of guardedness can be interpreted
as a suitable form of progressiveness of hybrid trajectories, and on the
other hand, the original hybrid monad
from~{\cite{NevesBarbosaEtAl16,neves18}} needs to be completed for the
sake of a smooth treatment of iteration, specifically, iteration
producing Zeno behaviour.  In our study we rely on Zeno behaviour
examples as important test cases helping to design the requisite
feasible abstractions. As another kind of guidance, we rely on Elgot's
notion of iteration~\cite{Elgot75} and the corresponding laws of
iteration theories~\cite{BloomEsik93}. In addition to the new hybrid
monad $\BBH_\mplus$ equipped with (partial) progressive iteration, we
introduced a larger monad $\BBH$ with total hybrid iteration extending
the progressive one. In showing the iteration laws we heavily relied on
the previously developed machinery for unifying guarded and unguarded
iteration~\cite{GoncharovRauchEtAl15,GoncharovSchroderEtAl17}.  We
illustrated the developed semantic foundations by introducing a simple
language for hybrid iteration with while-loops interpreted over the
Kleisli category of $\BBH$.

We regard our present work as a stepping stone for further
developments in various directions. After formalizing hybrid
computations via (guarded) Elgot monads, one obtains access to further
results involving (guarded) Elgot monads, e.g.\ it might be
interesting to explore the results of applying the \emph{generalized
  coalgebraic resumption monad
  transformer}~\cite{GoncharovRauchEtAl15} to $\BBH$ and thus obtain
in a principled way a semantic domain for hybrid processes in the
style of CCS. As shown by Theorem~\ref{thm:decomp}, the iteration of
$\BBH$ is a combination of progressive iteration and `singular
iteration'. An interesting question for further work is if this combination can
be framed as a universal construction.
We also would like to place $\BBH$ in a
category more suitable than $\Set$, but as
Example~\ref{exp:hilbert} suggests, this is expected to be a very
difficult problem.

Every monad on $\Set$ determines a corresponding \emph{Lawvere theory}, whose
presentation in terms of operations and equations is important for reasoning about
the corresponding -- in our case hybrid -- programs. We set as a goal for further research the task of identifying
the underlying Lawvere theories of hybrid monads and integrating them into generic
diagrammatic reasoning in the style of Fig\,\ref{fig:ax}. This should prospectively
connect our work to the line of research by Bonchi, Soboci{\'n}ski, and Zanasi (see e.g\,\cite{Bonchi14,Bonchi17}), who
studied various axiomatizations of PROPs (i.e\,monoidal generalizations of Lawvere
theories) and their diagrammatic languages. For a proper treatment of guarded
iteration (i.e.\ a specific instance of guarded monoidal
trace in the sense of~\cite{GoncharovSchroder18}), one would presumably need to develop
the corresponding notions of \emph{guarded Lawvere theory} and \emph{guarded PROP}.




\clearpage
\bibliography{hybrid}

\clearpage
\appendix
\allowdisplaybreaks

\section{Appendix: Omitted proofs}

\subsection{Proof that $\BBH$ is a monad}\label{a:H}
Note that $\rho$ is a pointwise retraction with $\upsilon$ as a section.
Hence each $HX$ is a quotient of $H_0MX$. We are thus left to show that $\rho$
preserves the monad structure. This is by definition for the unit. For Kleisli
lifting this amounts to the equation $\rho f^\kklstar = (\rho f)^\klstar\rho$,
for every $f:X\to H_0MY$ where we denote by $f^\kklstar$ the Kleisli lifting
of the monad $\BBH_0\BBM$ to distinguish it from the Kleisli lifting of $\BBH$.

Let $(d,e)\in H_0MX$, and let $(d_\star,e_\star) = \rho(d,e)\in HX$,
with
\begin{displaymath}
d_\star=\sup\{t<d\mid [0,t)\subseteq\dom e\},\qquad e_\star^t = \ite{e^t}{t\leq d_\star}{(\ite{e^{d_\star}}{\dom e =\Rz}{\bot})}.
\end{displaymath}
Since $f^\klstar = \rho (\upsilon f)^\kklstar\upsilon$, we need to check that
$\rho f^\kklstar(d,e) = \rho(\upsilon\rho f)^\kklstar (d_\star,e_\star)$, which
we obtain by transitivity from the following equations:
\begin{align}
  \rho f^\kklstar(d,e) =\;& \rho f^\kklstar(d_\star,e_\star),\label{eq:H-mon-1}\\
  \rho f^\kklstar(d_\star,e_\star) =\;&
\rho(\upsilon\rho f)^\kklstar (d_\star,e_\star).\label{eq:H-mon-2}
\end{align}
Let us show~\eqref{eq:H-mon-1} first. If $\dom e = \Rz$, i.e.\ $e$ is
a total function then $d_\star=d$,
$e=e_\star$ and~\eqref{eq:H-mon-1} holds trivially. Otherwise, it
turns into
\begin{displaymath}
  \rho f^\kklstar(d,e) = \rho f^\kklstar(\infty,e_\star)
\end{displaymath}
and the fact that $e$ is not total implies that $e^t\uar$ for some $t\leq d$. Then,
for this $t$, $(f^\kklstar(d,e))_\ev^t\uar$ and thus
$(\rho f^\kklstar(d,e))_\dr=\infty=(\rho f^\kklstar(\infty,e_\star))_\dr$. Let
\begin{align*}
  c=\sup\{t<(f^\kklstar(d,e))_\dr\mid [0,t)\subseteq\dom (f^\kklstar(d,e))_\ev\},
\end{align*}
and, as we have argued,
$c\leq d$. Note that
\begin{flalign*}
&&  c=&\;\sup\{t<(f^\kklstar(d,e))_\dr\mid [0,t)\subseteq\dom (f^\kklstar(d,e))_\ev\}\\
&&   =&\;\sup\{t<d_\star\mid [0,t)\subseteq\dom f_\ev^0\comp e\} &\by{since $e^t\uar$ for some $t\leq d$}\\
&&   =&\;\sup\{t<d_\star\mid [0,t)\subseteq\dom f_\ev^0\comp e_\star\}\\
&&   =&\;\sup\{t<(f^\kklstar(\infty,e_\star))_\dr\mid [0,t)\subseteq\dom (f^\kklstar(\infty,e_\star))_\ev\}.&\by{since $(f^\kklstar(\infty,e_\star))_\dr=\infty$}
\end{flalign*}
Now, since by definition $c\leq d$,
\begin{flalign*}
&&(\rho f^\kklstar(d,e))_\ev^t=&\; (f^\kklstar(d,e))_\ev^t=f_\ev^0(e^t)=(f^\kklstar(\infty,e_\star))_\ev^t=(\rho f^\kklstar(\infty,e_\star))_\ev^t &&& \text{if~~} t\in [0,c)\\
&&(\rho f^\kklstar(d,e))_\ev^t=&\; (f^\kklstar(d,e))_\ev^c=f_\ev^0(e^c)=(f^\kklstar(\infty,e_\star))_\ev^c=(\rho f^\kklstar(\infty,e_\star))_\ev^t &&& \text{if~~} t = c
\end{flalign*}
In summary, we obtained $(\rho f^\kklstar(d,e))_\ev=(\rho f^\kklstar(\infty,e_\star))_\ev$, as desired.

We proceed with the proof of~\eqref{eq:H-mon-2}. Equivalently, we replace it with
\begin{align*}
  \rho f^\kklstar(d,e) = \rho(\upsilon\rho f)^\kklstar (d,e),
\end{align*}
where $(d,e)$ falls into one of the following cases: $e$ is a total trajectory or $\dom e \neq \Rz$
with $d=\infty$. In the latter situation, we have $(f^\kklstar(d,e))_\dr = ((\upsilon\rho f)^\kklstar(d,e))_\dr = \infty$,
$(f^\kklstar(d,e))_\ev^t = f^0_\ev(e^t)$ and $((\upsilon\rho f)^\kklstar(d,e))_\ev^t=(\upsilon\rho f)^0_\ev(e^t)= f^0_\ev(e^t)$.
That is,  $f^\kklstar(d,e)$ and $(\upsilon\rho f)^\kklstar (d,e)$ are equal. Hence
they remain equal after applying $\rho$.

Finally, consider the case of total $e$. Again, we make use of the general fact that
\mbox{$(\upsilon\rho f)^0_\ev=f^0_\ev$}. Then, by definition,
\begin{align*}
   (f^\kklstar(d,e))_\dr =&\; d + f_\dr(e^d), &    (f^\kklstar(d,e))_\ev^t =&\; \ite{f^0_\ev(e^t)}{t<d}{f^{t-d}_\ev(e^d)},\\
   ((\upsilon\rho f)^\kklstar(d,e))_\dr =&\; d + (\upsilon\rho f)_\dr(e^d), &    ((\upsilon\rho f)^\kklstar(d,e))_\ev^t =&\; \ite{f^0_\ev(e^t)}{t<d}{(\upsilon\rho f)^{t-d}_\ev(e^d)}.
\end{align*}
Let $c=\sup\{t<d + f_\dr(e^d)\mid [0,t)\subseteq\dom (f^\kklstar(d,e))_\ev\}$. Now, if $f^0_\ev(e^t)\uar$
for some $t<d$ then $c<d$, $c=\sup\{t<d + (\upsilon\rho f)_\dr(e^d)\mid [0,t)\subseteq\dom ((\upsilon\rho f)^\kklstar(d,e))_\ev\}$
and thus for all $t$,
\begin{flalign*}
&&(\rho f^\kklstar(d,e))_\ev^t
=&\; \ite{(f^\kklstar(d,e))_\ev^t}{t\leq c}{\bot}\\
&&=&\; \ite{f^0_\ev(e^t)}{t\leq c}{\bot}&\by{using $c<d$}\\
&&=&\; \ite{((\upsilon\rho f)^\kklstar(d,e))_\ev^t}{t\leq c}{\bot}&\by{using $c<d$}\\
&&=&\;(\rho (\upsilon\rho f)^\kklstar(d,e))_\ev^t,
\end{flalign*}
which yields~\eqref{eq:H-mon-2}. Assume now that $f^0_\ev(e^t)\dar$ for all $t<d$,
which implies $c\geq d$. If $f^{t}_\ev(e^d)\dar$ for all $t$ then $c=\infty$
and~\eqref{eq:H-mon-2} is easy to see. We proceed under the assumption that
$f^{t}_\ev(e^d)\uar$ for some~  $t$, which implies that either
$\dom(f^\kklstar(d,e))_\ev = [0,c]$ or $\dom(f^\kklstar(d,e))_\ev = [0,c)$.
Now, %
\begin{flalign*}
&&  \sup&\{t< (f(e^d))_\dr\mid [0,t)\subseteq\dom (f(e^d))_\ev\}\\
&&=&\; \sup\{t<d + f_\dr(e^d)\mid [0,t)\subseteq\dom (f^\kklstar(d,e))_\ev\} -d&\!\!\by{since $[0,d)\subseteq\dom (f^\kklstar(d,e))_\ev$} \\
&&=&\; c -d,
\end{flalign*}
and therefore
\begin{align}\label{eq:rho-fed}
  (\rho f (e^d))_\ev^t = \ite{f_\ev^t (e^d)}{t\leq c-d}{\bot}
\end{align}
This entails
\begin{flalign*}
~\sup&\;\{t<d + (\upsilon\rho f)_\dr(e^d)\mid [0,t)\subseteq\dom ((\upsilon\rho f)^\kklstar(d,e))_\ev\} \\
=&\;\sup \{t\mid [0,t)\subseteq\dom ((\upsilon\rho f)^\kklstar(d,e))_\ev\} &\!\!\by{since $(\upsilon\rho f)_\dr(e^d)=\infty$}\\
=&\; \sup\{t\mid [0,t)\subseteq\dom (\upsilon\rho f(e^d))_\ev\}+d&\hspace{-5ex}\by{since $[0,d)\subseteq\dom ((\upsilon\rho f)^\kklstar(d,e))_\ev$}\\
=&\; \sup\{t\mid [0,t)\subseteq\dom (\rho f(e^d))_\ev\}+d& \\
=&\; \sup\{t< f_\dr(e^d)\mid [0,t)\subseteq\dom (f(e^d))_\ev\}+d &  \\
=&\; c -d +d\\
=&\; c.
\end{flalign*}
For every $t < d$,
$(\rho f^\kklstar(d,e))_\ev^t = (\rho (\upsilon\rho
f)^\kklstar(d,e))_\ev^t$ as before and we are left to check that this
equality is true also for every $t\geq d$, assuming that
$d\neq\infty$. Note that

\begin{flalign*}
&&(\rho f^\kklstar(d,e))_\ev^t
=&\; \ite{(f^\kklstar(d,e))_\ev^t}{t\leq c}{\bot}\\
&&=&\; \ite{(\ite{f^0_\ev(e^t)}{t<d}{f^{t-d}_\ev(e^d)})}{t\leq c}{\bot}\\
&&=&\; \ite{(\ite{f^0_\ev(e^t)}{t<d}{(\upsilon\rho f)^{t-d}_\ev(e^d)})}{t\leq c}{\bot}&\by{\eqref{eq:rho-fed}}\\
&&=&\; \ite{((\upsilon\rho f)^\kklstar(d,e))_\ev^t}{t\leq c}{\bot}\\
&&=&\; (\rho (\upsilon\rho f)^\kklstar(d,e))_\ev^t,
\end{flalign*}
which finishes the proof of~\eqref{eq:H-mon-2}.
\qed

\subsection{Proof of Lemma~\ref{lem:guard}}
Let us verify the axioms.
\begin{itemize}
  \item\textbf{(trv)} Given $(d,e):X\to H_{\mplus}Y$, then $\bigl( (H_{\mplus}\inl)(d(x), e(x))\bigr)_e^0 = \inl e^0(x)$.
\item\textbf{(cmp)} Suppose, $(d,e):X\to_2 H_{\mplus}(Y+Z)$, $g:Y\to_2 H_{\mplus}(V+W)$, $h:Z\to H_{\mplus}(V+W)$. Then
$([g,h]^\star (d,e))_e^0(x) = [g,h]_e^0(e^0(x)) = g_e^0(p(x)) = \inl q(p(x))$
where $p:X\to Y$ and $q:Y\to V$ exist by assumption.
\item \textbf{(sum)} Let $f:X\to_2 H_{\mplus}(Y+Z)$ and $g:Y\to_2 H_{\mplus}(Y+Z)$. Hence
$f_e^0 = \inl p$ and $g_e^0 = \inl q$ for some $p$ and~$q$. Then $[f,g]_e^0 = \inl [p,q]$.
\qed
\end{itemize}

\subsection{Proof of Theorem~\ref{thm:H0M}}
It follows by routine calculations that $\sqsubseteq$ is a partial
order on sets of the type $H_0 M X$, and that $(0,\bot)$ is the bottom
element with respect to this order.

Next we prove that $\sqsubseteq$ is $\omega$-complete,
specifically that every chain of trajectories
\begin{flalign*}
  (d_1,e_1)\appr (d_2,e_2)\appr\ldots
\end{flalign*}
\noindent
has a least upper bound $(d,e)$ with $d=\sup_i d_i$ and for every $t$, $e^t = e_i^t$
if $e_i^t \dar$ for some~$i$ and $e^t=\bot$ if no such $i$ exists.
First, we show that for every $i$ the inequation
$(d_i,e_i) \sqsubseteq (d,e)$ holds. Note that for every
$i$, $d_i \leq d$ and $e_i \leq e$. Moreover, if for some
index $j$, $d_j \in \Rz$ and $e_j^{d_j} \dar$, then $d_j$
is the largest element in the sequence $d_1 \leq d_2 \leq \dots$, and
therefore $d_j = \sup d_i$. This proves that
$(d_i,e_i) \sqsubseteq (d,e)$  for all $i$. Next we show
that if a trajectory $(d_\klstar,e_\klstar) \in H_0MX$ also satisfies
$(d_i,e_i) \sqsubseteq (d_\klstar,e_\klstar)$ for all $i$ then
$(d,e) \sqsubseteq (d_\klstar,e_\klstar)$. Clearly, $d \leq d_\klstar$
and $e \leq e_\klstar$.  Moreover, if $d \in \Rz$ and $e^d \dar$ then
there exists some index $j$ such that $\sup_i d_i = d_j$, and
since $(d_j,e_j) \sqsubseteq (d_\klstar,e_\klstar)$ we have
$d = d_\klstar$.

Our next step is to show that for every function $f : X \to H_0 M Y$,
$(d_1,e_1) \sqsubseteq (d_2,e_2)$ implies $f^\klstar(d_1,e_1) \sqsubseteq f^\klstar(d_2,e_2)$. We
first verify the goal under the assumption that $d_1 = \infty$ or $e_1^{d_1} \uar$. In either
case we have
\begin{flalign*}
  (f^\klstar(d_1,e_1))_{\dr} = d_1 \leq d_2 \leq (f^\klstar(d_2,e_2))_{\dr}.
\end{flalign*}
The fact that
$\dom f^\klstar (d_1,e_1) \subseteq \dom f^\klstar(d_2,e_2)$ is by the following calculation (here
we use composition of partial maps as juxtaposition, e.g.\ $f^0_\ev {e_1}^t_\ev$, without notice):
\begin{flalign*}
  \dom  f^\klstar (d_1,e_1) & =\left \{t \leq d_1 \mid f^0_\ev e^t_1 \dar \right \} \\
  & \subseteq \left \{t \leq d_1 \mid f^0_\ev e_2^t \dar \right \} \\
 & \subseteq \left \{t \leq d_2 \mid f^0_\ev e_2^t \dar \right \} \\
  &  \subseteq \dom f^\klstar(d_2,e_2).
\end{flalign*}
\noindent
The remaining conditions behind
$(d_1,e_1) \sqsubseteq (d_2,e_2)$ are easy to verify. We proceed to analyse the remaining
case of $d_1 \in \Rz$ and $e_1^{d_1} \dar$, which implies $d_1=d_2$, by definition. This
immediately yields the equality of durations
\begin{flalign*}
  (f^\klstar(d_1,e_1))_\dr = d_1 + (f(e_1^{d_1}))_\dr = d_2 + (f(e_2^{d_2}))_\dr = (f^\klstar(d_2,e_2))_\dr.
\end{flalign*}
In regard to $\dom f^\klstar (d_1,e_1) \subseteq \dom f^\klstar(d_2,e_2)$,
we calculate,
\begin{flalign*}
  \dom f^\klstar (d_1,e_1) & =
  \{t \leq d_1 \mid f^0_\ev e_1^t \dar \} \cup
  \{t + d_1 \in \Rz \mid f^{t}_\ev e_1^{d_1} \dar \} \\
  & \subseteq
  \{ t \leq d_2 \mid f^0_\ev e_2^t \dar \} \cup
  \{t + d_1 \in \Rz \mid f^{t}_\ev e_1^{d_1} \dar \}  \\
  &  = \{ t \leq d_2 \mid f^0_\ev e_2^t \dar \} \cup
  \{t + d_2 \in \Rz \mid f^{t}_\ev e_2^{d_2} \dar \} \\
  & = \dom f^\klstar (d_2,e_2).
\end{flalign*}
The remaining conditions behind
$(d_1,e_1) \sqsubseteq (d_2,e_2)$ are again easy to verify.

Next we prove that $f \sqsubseteq g : X \to H_0 M Y$ implies $f^\klstar (d,e) \sqsubseteq g^\klstar (d,e)$
for any trajectory $(d,e) \in H_0 M X$. The
proof that $(f^\klstar(d,e))_\dr \leq (g^\klstar(d,e))_\dr$ follows almost
directly.
Regarding the domains of $f^\klstar(d,e)$ and $g^\klstar(d,e)$, we just
need to calculate
\begin{flalign*}
  \dom  f^\klstar(d,e) & =  \{t \leq d \mid f^0_\ev {e}^t \dar \} \cup
  \{ t + d \in \Rz \mid f^t_\ev {e}^d  \dar \} \\
  & \subseteq  \{t \leq d \mid g^0_\ev {e}^t \dar \} \cup
  \{ t + d \in \Rz \mid g^t_\ev {e}^d \dar \} \\
 & = \dom g^\klstar(d,e).
\end{flalign*}
The previous reasoning also allows us to conclude straightforwardly
that for every $t \in \dom f^\klstar(d,e)$ the equation
$(f^\klstar(d,e))_\ev^t = (g^\klstar(d,e))_\ev^t $ holds.  Finally, as the
last step in showing $f^\klstar(d,e) \sqsubseteq g^\klstar(d,e)$, we need
to prove that $d + f(e^d)_\dr \in \Rz$ and
$(f^\klstar(d,e))_\ev(d + f(e^d)_\dr) \dar$ imply
$(f^\klstar (d,e))_\dr = (g^\klstar(d,e))_\dr$. So assume the left side
of the implication: it entails that $f(e^d)_\ev (f(e^d)_\dr) \dar$, and since
$f (e^d) \sqsubseteq g(e^d)$ we have $f(e^d)_\dr = g(e^d)_\dr$ which
proves that $(f^\klstar (d,e))_\dr = (g^\klstar(d,e))_\dr$. We are thus
done with the proof of Clause~\ref{item:H0M1} of the theorem.

Lets us show Clause~\ref{item:H0M2}. First, we show the equation
\begin{flalign}\label{eq:kl-cont1}
  f^\klstar\left(\bigjoin\nolimits_i ~(d_i,e_i)\right) = \bigjoin\nolimits_i f^\klstar(d_i,e_i)
\end{flalign}
assuming an $\omega$-chain $(d_1, e_1) \sqsubseteq (d_2,e_2) \sqsubseteq \dots$ We start by showing that the durations in the two sides of the
equation are equal by case distinction: first, we
assume that for all $i \in \omega$ either $d_i = \infty$ or
$e_i^{d_i} \uar$, and calculate,
\begin{flalign*}
  \left (f^\klstar\left (\bigjoin\nolimits_i~(d_i,e_i) \right ) \right )_\dr = \left (\bigjoin\nolimits_i~(d_i,e_i) \right )_\dr
  = \left (\bigjoin\nolimits_i~ f^\klstar(d_i,e_i)  \right )_\dr.
\end{flalign*}
\noindent
Moreover,
\begin{flalign*}
  \dom f^\klstar \left (\bigjoin\nolimits_i~ (d_i,e_i) \right )
  & = \left \{ t \leq \sup_i d_i \mid f_\ev^0 e_k^t \dar \text{ for some } k \in \omega \right \} \\
  & = \bigcup\nolimits_i \left \{t \leq d_i \mid f_\ev^0 e_i^t \dar \right \} \\
  & = \dom \left (\bigjoin\nolimits_i~ f^\klstar(d_i,e_i) \right ).
\end{flalign*}
Equation~\eqref{eq:kl-cont1} now follows immediately. We will now assume the existence of some index $j \in \omega$ such
that $d_j \in \Rz$ and $e_j^{d_j} \dar$. This entails $\sup_i d_i = d_j$, which we use to obtain,
\begin{flalign*}
  \left (f^\klstar \left (\bigjoin\nolimits_i~(d_i,e_i) \right  ) \right )_\dr
  & = \left (f^\klstar \left (\bigjoin\nolimits_i\ (d_j,e_{j+i}) \right ) \right )_\dr \\
  & = d_j + (f(e_j^{d_j}))_\dr \\
  & = \left (\bigjoin\nolimits_i~  f^\klstar (d_j,e_{j+i}) \right )_\dr \\
  & = \left (\bigjoin\nolimits_i~  f^\klstar (d_i,e_i) \right )_\dr.
\end{flalign*}
The equality of the domains of $f^\klstar\bigl (\bigjoin\nolimits_i~ (d_i, e_i)\bigr)$ and
$\bigl (\bigjoin\nolimits_i f^\klstar(d_i,e_i)\bigr)$ is established as follows:
\begin{flalign*}
  \dom f^\klstar \left (\bigjoin\nolimits_i~ (d_i,e_i) \right )
  & = \dom f^\klstar \left (\bigjoin\nolimits_i~ (d_j, e_{j+i}) \right ) \\
  & = \left \{ t \leq d_j \mid f_\ev^0 e_{k}^t \dar \text{ for some }
  k \geq j \right \} \cup\bigl \{ t + d_j \in \Rz \mid f_\ev^t e_j^{d_j} \dar \bigr \} \\
  & = \bigcup\nolimits_{i \geq j} \left \{ t \leq d_i \mid f_\ev^0 e_i^t \dar \right \}
  \cup \big \{ t + d_j \in \Rz \mid f_\ev^t e_j^{d_j} \dar  \big \} \\
  & = \dom \left (\bigjoin\nolimits_i f^\klstar(d_j,e_{j+i}) \right ) \\
  & = \dom \left (\bigjoin\nolimits_i f^\klstar(d_i,e_i) \right ).
\end{flalign*}
The requisite equation~\eqref{eq:kl-cont1} is now immediate. Finally, we show that
\begin{flalign}\label{eq:kl-cont2}
  \left(\bigjoin\nolimits_i f_i\right)^\klstar (d,e) = \bigjoin\nolimits_i f_i^\klstar (d,e),
\end{flalign}
for any family of functions $f_i:X \to H_0 M Y$ forming a chain $f_1 \sqsubseteq f_2 \sqsubseteq \ldots$
We proceed again by case distinction. First assume that $d = \infty$ or $e^d \uar$, which
immediately implies
\begin{flalign*}
  \left ( \left (\bigjoin\nolimits_i f_i \right )^\klstar (d,e)\right )_\dr = d
  & = \left(\bigjoin\nolimits_i f_i^\klstar(d,e)\right)_\dr.
\end{flalign*}
Next we calculate the domains as follows:
\begin{flalign*}
  \dom\ \left ( \bigjoin\nolimits_i f_i \right )^\klstar (d,e)
  & = \{ t \leq d \mid {(f_k)}_\ev^0\comp e^t \dar \text{ for some } k \in \omega \} \\
  & = \bigcup\nolimits_i \{ t \leq d \mid (f_i)_\ev^0\comp e^t \dar \} \\
  & = \dom \left(\bigjoin\nolimits_i f_i^\klstar (d,e)\right).
\end{flalign*}
This yields~\eqref{eq:kl-cont2} straightforwardly. Let us now stick to the remaining option that $d \in \Rz$ and $e^d \dar$.
For the durations we have
\begin{flalign*}
  \left ( \left(\bigjoin\nolimits_i f_i \right)^\klstar (d,e) \right )_\dr
  & = d  + \sup\nolimits_i d_i \\
  & = \sup\nolimits_i (d + d_i)\\
  & = \left (\bigjoin\nolimits_i f_i^\klstar (d,e) \right )_\dr.
\end{flalign*}
Regarding domains, we calculate
\begin{flalign*}
  \dom\ \left ( \left (\bigjoin\nolimits_i f_i \right )^\klstar (d,e) \right )
  & = \left \{ t \leq d \mid (f_k)_\ev^0\comp e^t \dar \text{ for some } k \in \omega \right \}\\
    &\qquad \cup \left \{ t + d \in \Rz \mid \left(\bigjoin\nolimits_i f_i\right)_\ev^t e^d \dar \right \}  \\
  & = \bigcup\nolimits_i \left \{ t \leq d \mid (f_i)_\ev^0\comp e^t \dar \right \}
   \cup \bigcup\nolimits_i \left \{ t + d \in \Rz \mid (f_i)_\ev^t\comp e^d \dar \right \} \\
  & = \bigcup\nolimits_i \left \{ t \leq d \mid (f_i)_\ev^0\comp e^t \dar \right \} \cup
  \left \{ t + d \in \Rz \mid (f_i)_\ev^t \comp e^d\dar \right \} \\
  & = \dom\left(\bigjoin\nolimits_i f_i^\klstar (d,e) \right).
\end{flalign*}
Again, the equation~\eqref{eq:kl-cont2} is obtained straightforwardly.

Finally, let us check Clause~\ref{item:H0M3}, i.e.\ that $f^\klstar (0,\bot)=
(0,\bot)$ for every map $f:X\to H_0MY$. Indeed, the duration part of $f^\klstar
(0,\bot)$ is $0$ for $\bot$ is the totally undefined function, in particular,
undefined at $0$. The evolution part of $f^\klstar (0,\bot)$ is $\bot$ per
definition.
\qed

\subsection{Proof of Theorem~\ref{thm:retract}}
As an preparatory step, we prove two lemmas.
\begin{lemma}\label{lem:traj}
  Consider a map $f:X\to H_0 M (Y+X)$ and an element $x \in X$. The
  condition,
  \begin{flalign*}
    (f^\iistar(x))^t_\ev = y \in Y
  \end{flalign*}
  holds iff there exists a natural number $n \in \nats$ such that
  $(f^{\brks{n}}(x))^t_\ev = y \in Y$.
\end{lemma}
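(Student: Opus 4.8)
The plan is to derive the equivalence directly from the explicit description of suprema in $H_0MX$ provided by Theorem~\ref{thm:H0M}. By Corollary~\ref{cor:H0M-iter}, $f^\iistar = \bigjoin_i f^{\brks{i}}$ with $f^{\brks{0}} = (0,\bot)$ and $f^{\brks{i+1}} = [\eta, f^{\brks{i}}]^\klstar f$, the join being formed in the hom-set $\Hom(X, H_0MY)$. Since the order $\appr$ on hom-sets is defined pointwise, so is this join; hence $f^\iistar(x) = \bigjoin_i f^{\brks{i}}(x)$, the least upper bound in $H_0MY$ of the $\omega$-chain $f^{\brks{0}}(x)\appr f^{\brks{1}}(x)\appr\cdots$ (which exists by Clause~\ref{item:H0M1} of Theorem~\ref{thm:H0M}). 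Writing $(d_i,e_i) = f^{\brks{i}}(x)$ and $(d,e) = f^\iistar(x)$, I would then recall the shape of this supremum as established in the proof of Clause~\ref{item:H0M1}: for each $t$, $e^t = e_i^t$ for any index $i$ with $e_i^t\dar$, and $e^t = \bot$ if no such index exists.

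With this in hand, both implications are then immediate. For the ``if'' direction: if $(f^{\brks{n}}(x))^t_\ev = y\in Y$ for some $n$, then $e_n^t\dar$, so $e^t = e_n^t = y$, that is $(f^\iistar(x))^t_\ev = y$. For the ``only if'' direction: if $(f^\iistar(x))^t_\ev = y\in Y$, then $e^t\neq\bot$, so by the description there exists some $n$ with $e_n^t\dar$, and for that $n$ we get $(f^{\brks{n}}(x))^t_\ev = e_n^t = e^t = y$.

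Since the real content here --- completeness of $\appr$ together with the concrete form of its suprema --- is already packaged in Theorem~\ref{thm:H0M}, there is no genuine obstacle left; the only points needing a moment's care are that $\bigjoin$ on hom-sets is computed pointwise and that $\bigl(f^{\brks{i}}(x)\bigr)_i$ is indeed $\appr$-increasing, both of which are immediate (the former from the pointwise definition of $\appr$ on hom-sets, the latter from Corollary~\ref{cor:H0M-iter}). In particular, continuity of Kleisli composition (Clause~\ref{item:H0M2}) plays no role in this lemma.
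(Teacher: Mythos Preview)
Your proposal is correct and follows essentially the same approach as the paper: both arguments rest on the fact that $f^\iistar(x)=\bigjoin_i f^{\brks{i}}(x)$ pointwise and then read off each direction from the description of this supremum. The only cosmetic difference is that you cite the explicit formula for the join from the proof of Theorem~\ref{thm:H0M}, whereas the paper phrases the two directions via the order relation $f^{\brks{n}}(x)\appr f^\iistar(x)$ and the least-upper-bound property directly.
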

\begin{proof}
Suppose that $(f^{\brks{n}}(x))^t_\ev = y \in Y$ for some $n$ and note that 
$f^{\brks{n}}(x)\appr f^{\iistar}(x)$, which easily follows by induction on $n$.
By definition of the order $(f^{\brks{n}}(x))^t_\ev$ is defined and equals~$y$.
Suppose that conversely, $(f^\iistar(x))^t_\ev=y\in Y$. If $(f^{\brks{n}}(x))^t_\ev = y' 
\in Y$ for some $n$ and $y'$ then $y'=y$ by the previous argument and we are done.
Otherwise, $(f^{\brks{n}}(x))^t_\ev\uar$ for every $n$ which contradicts to the fact 
that $f^{\iistar}(x)$ is the least upper bound of all the $(f^{\brks{n}}(x))^t_\ev\uar$.
\end{proof}

\begin{lemma}\label{lem:itercong}
  Consider a natural number $n \in \nats$ and a non-negative real
  number $t \in \Rz$. If $(f^{\brks{n}} (x))^t_\ev \, \dar$ and
  $((\upsilon \rho f)^{\brks{n}} (x))^t_\ev \uar$ then there exists a
  non-negative real number $t' \leq t$ such that
  $(f^{\brks{m}} (x))^{t'}_\ev \uar$ for all $m \geq n$.
\end{lemma}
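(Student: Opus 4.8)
The plan is to induct on $n$, quantifying over all $x\in X$ and all $t\in\Rz$. The base case $n=0$ is vacuous, since $f^{\brks0}(x)=(0,\bot)$ and hence $(f^{\brks0}(x))^t_\ev\uar$ for every $t$, contradicting the first hypothesis. So fix $n\geq 1$, assume the statement for $n-1$ (and all states and times), and write $(d,e)=f(x)$ and $(\tilde d,\tilde e)=\rho(d,e)$, so that $f^{\brks n}(x)=[\eta,f^{\brks{n-1}}]^\klstar(d,e)$ and $(\upsilon\rho f)^{\brks n}(x)=[\eta,(\upsilon\rho f)^{\brks{n-1}}]^\klstar(\tilde d,\tilde e)$, where $(\argument)^\klstar$ is the Kleisli lifting of $\BBH_0\BBM$. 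Recall that $\tilde e$ is $e$ restricted to the down-closed prefix ending at $d_\star=\sup\{s<d\mid [0,s)\subseteq\dom e\}$ (and $\tilde d=d$ if $e$ is total, $\tilde d=\infty$ otherwise), so $e$ and $\tilde e$ differ only where $e$ has a ``gap''.

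I would then split on whether $t\leq d$ or $t>d$ and read off the evolutions from the $\BBH_0$-Kleisli formula. For $t\leq d$ one gets $(f^{\brks n}(x))^t_\ev=[\eta,f^{\brks{n-1}}]^0_\ev(e^t)$ and $((\upsilon\rho f)^{\brks n}(x))^t_\ev=[\eta,(\upsilon\rho f)^{\brks{n-1}}]^0_\ev(\tilde e^t)$; for $t>d$ one has $d\in\Rz$ and, by the first hypothesis, $e^d\dar$ (otherwise $(f^{\brks n}(x))^t_\ev=\bot$), and then $(f^{\brks n}(x))^t_\ev=[\eta,f^{\brks{n-1}}]^{t-d}_\ev(e^d)$, while $((\upsilon\rho f)^{\brks n}(x))^t_\ev$ is $[\eta,(\upsilon\rho f)^{\brks{n-1}}]^{t-d}_\ev(e^d)$ if $e$ is total and $[\eta,(\upsilon\rho f)^{\brks{n-1}}]^0_\ev(\tilde e^t)$ with $\tilde e^t\uar$ if $e$ is not total (as then $t>d\geq d_\star$). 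Two kinds of subcase arise. \emph{(i)} The value at $t$ is read off a ``continue'' entry, i.e.\ $e^t=\inr z$ (when $t\leq d$) or $e^d=\inr z$ (when $t>d$) with the matching truncated entry also defined; then the two hypotheses pass verbatim to $(f^{\brks{n-1}}(z))^s_\ev\dar$ and $((\upsilon\rho f)^{\brks{n-1}}(z))^s_\ev\uar$ at the shifted time $s=0$, resp.\ $s=t-d$, and the inductive hypothesis yields $t''\leq s$ with $(f^{\brks m}(z))^{t''}_\ev\uar$ for all $m\geq n-1$; taking $t'=t$ (resp.\ $t'=d+t''$) settles the claim, because $(f^{\brks m}(x))^{t'}_\ev=(f^{\brks{m-1}}(z))^{t''}_\ev$ for every $m\geq n$. \emph{(ii)} In every remaining non-vacuous situation the discrepancy between the two sides forces $\tilde e$ to be undefined where $e$ is defined, or forces $e$ to be non-total; one checks that $e$ then already has an undefined point $s<d$ with $s\leq t$ --- for $t\leq d$ via the defining supremum of $d_\star$, for $t>d$ via the flattening condition together with $e^d\dar$. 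Taking $t'=s$ finishes the argument for all $m$ simultaneously, since $e^s=\bot$ propagates through the Kleisli lifting: $(f^{\brks m}(x))^s_\ev=[\eta,f^{\brks{m-1}}]^0_\ev(\bot)=\bot$ regardless of $m$.

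The main obstacle is not any single computation but the bookkeeping forced by the fact that $\upsilon$ is not a monad morphism (Remark~\ref{rem:n-strict}): $\rho$ inflates the duration of a non-total trajectory to $\infty$, so $(\upsilon\rho f)(x)\not\appr f(x)$ in general and one cannot simply appeal to monotonicity of Kleisli composition (Theorem~\ref{thm:H0M}). The delicate point in subcase \emph{(ii)} is to pin the truncation witness $s$ strictly below $d$ rather than merely in $[0,t]$; it is precisely the flattening condition on $e$ that guarantees that, whenever $e^d$ is defined, all undefined points of $e$ lie below $d$, which is what makes the case $t>d$ go through.
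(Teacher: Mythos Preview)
Your proposal is correct and follows essentially the same route as the paper: induction on $n$, a case split on $t\leq d$ versus $t>d$, invoking the induction hypothesis when the computation genuinely continues via $e^d=\inr z$, and otherwise locating an undefined point $s<d$ of $e$ that propagates through every $(f^{\brks m}(x))_\ev$.

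One organizational remark: your subcase~(i) with $t\leq d$ and shifted time $s=0$ is in fact vacuous. The paper records (and uses) the easily checked identity $((\upsilon\rho f)^{\brks{n-1}})^0_\ev=(f^{\brks{n-1}})^0_\ev$, so if $e^t=\tilde e^t=\inr z$ then both sides at time $t$ equal $(f^{\brks{n-1}}(z))^0_\ev$ and cannot disagree in definedness. Hence for $t\leq d$ the only live possibility is your subcase~(ii), which is exactly how the paper treats that range: it observes that the discrepancy forces $\tilde e^t\uar$ while $e^t\dar$, hence a gap $t'\leq t\leq d$ in $\dom e$, and then $(f^{\brks m}(x))^{t'}_\ev=[\eta,f^{\brks{m-1}}]^0_\ev(\bot)=\bot$ for all $m\geq 1$. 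This does not affect the correctness of your argument, only its bookkeeping.
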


\begin{proof}
  The proof follows by induction over $n$. The base case
  $n=0$ is vacuously true, because $(f^{\brks{n}} (x))_\ev=\bot$ and therefore 
  the premise $(f^{\brks{n}} (x))^t_\ev \, \dar$ is not true for any $t$.
   
  For the induction step assume that $((\upsilon \rho f)^{\brks{n + 1}}
  (x))^t_\ev \uar$ and $(f^{\brks{n + 1}} (x))^t_\ev\dar$. By
  definition of $(-)^{\brks{n + 1}}$, equivalently, $([\eta,(\upsilon
  \rho f)^{\brks{n}}]^\star \upsilon \rho f (x))^t_\ev \uar$ and
  $([\eta,f^{\brks{n}}]^\star f (x))^t_\ev\dar$. We proceed by case
  distinction. If $t \leq f_\dr(x)$ then the assumption $([\eta,(\upsilon \rho
  f)^{\brks{n}}]^\star \upsilon \rho f (x))^t_\ev \uar$ is equivalent to
  $([\eta,f^{\brks{n}}]^\star \upsilon \rho f (x))^t_\ev \uar$, using the easily 
  verified fact that $((\upsilon \rho f)^{\brks{n}})_\ev^0 = (f^{\brks{n}})_\ev^0$. 
  Since by another assumption $([\eta,f^{\brks{n}}]^\star f (x))^t_\ev\dar$, there 
  exists a time instant $t' \leq t$ such that $(f(x))^{t'}_\ev \uar$. Since $t' 
  \leq t \leq f_\dr(x)$, clearly, $(f^{\brks{m}} (x))^{t'}_\ev \uar$ for all $m\geq n$
  and we are done. 

  In the remaining case $t > f_\dr(x)$, either 
  $(\upsilon \rho f (x))^{f_\dr(x)}_\ev = \inr x'$ and 
  $((\upsilon\rho f)^{\brks{n}}(x'))_\ev^{t - f_\dr(x)}\uar$ and 
  $(f^{\brks{n}}(x'))_\ev^{t - f_\dr(x)}\dar$ and we reduce to the induction hypothesis,
  or $(\upsilon \rho f (x))^{f_\dr(x)}_\ev\uar$, which implies $(\upsilon \rho f (x))^{t'}_\ev\uar$
  for some $t' < f_\dr(x) < t$. In the latter case $(f^{\brks{m}} (x))^{t'}_\ev \uar$, as desired.
\end{proof}
Let us continue the proof of Theorem~\ref{thm:retract}.
In order to show equality of evolutions, we reason as follows:
    \begin{flalign*}
      &&  (\rho (\upsilon \rho f)^\iistar&(x))^t_{\ev} = y_t \in Y\\
      && \Rightarrow & ~
      \forall t' \leq t. \> ((\upsilon \rho f)^\iistar(x))^{t'}_{\ev} = y_{t'} \in Y
      & \by{Definition of $\rho$} \\
      && \Rightarrow & ~ \forall t' \leq t. \> \exists n_{t'} \in \nats. \>
      ((\upsilon \rho f )^{\brks{n_{t'}}}(x))^{t'}_{\ev} = y_{t'} \in Y &
    \by{Lemma~\ref{lem:traj}}\\
    && \Rightarrow & ~ \forall t' \leq t. \> \exists n_{t'} \in \nats. \>
    (f^{\brks{n_{t'}}}(x))^{t'}_{\ev} = y_{t'} \in Y & \\
    && \Rightarrow & ~ (\rho f^\iistar (x))^t_{\ev} = y_t \in Y &
    \by{Lemma~\ref{lem:traj}, defn. of $\rho$}
  \intertext{Conversely,}
      && (\rho f^\iistar(x))^t_{\ev}& = y_t \in Y\\
      && \Rightarrow & ~
      \forall t' \leq t. \> (f^\iistar(x))^{t'}_{\ev} = y_{t'} \in Y
      & \by{Definition of $\rho$} \\
      && \Rightarrow & \> \forall t' \leq t. ~ \exists n_{t'} \in
      \nats. \> (f^{\brks{n_{t'}}} (x))^{t'}_{\ev} = y_{t'} \in Y &
      \by{Lemma~\ref{lem:traj}}\\
      && \Rightarrow & ~ \forall t' \leq t. \> \exists n_{t'} \in
      \nats. \>
      ((\upsilon \rho f)^{\brks{n_{t'}}}(x))^{t'}_{\ev} = y_{t'} \in Y &
      \by{Lemma~\ref{lem:itercong}}\\
      && \Rightarrow & ~ (\rho (\upsilon \rho f)^\iistar (x))^t_{\ev} = y_t \in Y &
      \by{Lemma~\ref{lem:traj}, defn. of $\rho$}
    \end{flalign*}
    Next we will show that the trajectories $\rho f^\iistar(x)$ and
    $\rho (\upsilon \rho f)^\iistar(x)$ have the same duration.

Suppose that the
trajectory $f^\iistar(x)$ is total. This means that the trajectory
$(\upsilon\rho f)^\iistar(x)$ must also be total, and therefore the
equation that we want to prove reduces to
$f_\dr^\iistar(x) = ((\upsilon\rho f)^\iistar(x))_\dr$. We are thus
left to check that
$((\upsilon\rho f)^{\brks{i}}(x))_\dr=(f^{\brks{i}}(x))_\dr$ for every
$i$, which is straightforward by induction over $i$. Now suppose that
the trajectory $f^\iistar(x)$ is not total. This means that the
trajectory $(\upsilon\rho f)^\iistar(x)$ will also not be total and
therefore
$(\rho f^\iistar(x))_\dr = \infty = (\rho (\upsilon\rho
f)^\iistar(x))_\dr$.
\qed

\end{document}